\documentclass{llncs}

\usepackage[utf8]{inputenc}
\usepackage[T1]{fontenc}
\usepackage[english]{babel}

\usepackage{amsmath,amssymb}
\usepackage{graphicx}
\usepackage{color}
\usepackage{url,ifthen}
\usepackage{fancyhdr}
\usepackage{amsmath}

\usepackage[dvipsnames]{xcolor}
\usepackage{booktabs}
\usepackage[ruled,vlined]{algorithm2e}
\newboolean{anonymous}
\setboolean{anonymous}{false}

\newcommand{\C}{\mathcal{C}}

\newcommand{\F}{\mathbb{F}}

\begin{document}
	\title{Twisted Conjugacy and the Classification of Induced Centrosymmetric Alternant Codes}

\ifthenelse{\boolean{anonymous}}
{ 
\author{
    Anonymized for submission
}
\institute{
}
} 
{ 
	\author{
   Ousmane Ndiaye \and
Massamba  Sow
}
\institute{
Universit\'e Cheikh Anta Diop de Dakar, FST, DMI, LACGAA,\\ Senegal,\\
\email{ousmane3.ndiaye@ucad.edu.sn}\\
\email{massamba2.sow@ucad.edu.sn}
}

} 
	\date{\today}
	\maketitle
\begin{abstract}
This paper presents a classification of induced centrosymmetric alternant codes through the study of the automorphism structures inherited from Generalized Reed--Solomon (GRS) codes. We introduce the twisted conjugation action naturally associated with projective semilinear transformations and establish its correspondence with ordinary conjugacy in the projective semilinear group. This correspondence enables the application of Shintani's theorem to classify the $\gamma_{p^j}$-similarity classes of involutions. As a consequence, we obtain necessary and sufficient conditions for an alternant code to admit a centrosymmetric structure induced by a projective semilinear automorphism. The resulting classification unifies the different families of induced centrosymmetric alternant codes within a common automorphism-based framework.
\par\textbf{Keywords:} Alternant codes, Generalized Reed--Solomon codes, automorphism groups, twisted conjugacy, projective semilinear group.
\end{abstract}

\section{Introduction}

Error-correcting codes play a central role in information theory by ensuring reliable data transmission over noisy channels. Among the numerous families of algebraic codes, Generalized Reed--Solomon (GRS) codes occupy a prominent position because of their optimal error-correcting capabilities and rich algebraic structure. Since alternant codes are obtained as subfield subcodes of GRS codes, they inherit many of their structural properties while encompassing several important code families.

The study of automorphism groups of linear codes has attracted considerable attention because these groups capture the algebraic symmetries of the underlying code. Such symmetries provide valuable insight into the structural and combinatorial properties of codes and have led to the construction of several remarkable families, including cyclic, quasi-cyclic \cite{10.1007/978-3-642-02384-2_6} and quasi-dyadic \cite{MB09} codes. Consequently, understanding how automorphisms of GRS codes are inherited by alternant codes is a fundamental problem in coding theory.

Recent work by Ndiaye~\cite{Ousmane2023} introduced a family of quasi-centrosymmetric alternant codes obtained from quasi-centrosymmetric Cauchy matrices. These codes arise from the particular involution induced by the projective transformation $\sigma(x)=x+a,$
and possess an automorphism group of order two. This naturally raises the broader question of determining all centrosymmetric alternant codes induced by automorphisms of the underlying GRS codes.

In this paper, we address this question by establishing an automorphism-based classification of induced centrosymmetric alternant codes. Our approach is based on the identification of the twisted conjugation action naturally associated with semilinear projective transformations. This identification establishes a bridge between twisted conjugacy in $PGL_2(\mathbb F_q)$ and ordinary conjugacy in $P\Gamma L_2(\mathbb F_q)$, allowing us to apply Shintani's \cite{shintani1976two} correspondence to classify the induced centrosymmetric alternant codes.

\subsection{Related Works}

The study of automorphisms of algebraic codes has a long history. Stichtenoth \cite{STICHTENOTH1989205} characterized cyclic extended Goppa codes, while Berger \cite{BERGER2000255} showed that parity-check subcodes and extended Goppa codes are alternant codes and introduced new cyclic and non-cyclic constructions. Berger subsequently investigated Goppa and related codes invariant under permutations induced by elements of the projective semilinear group, highlighting the close relationship between automorphisms of GRS codes and the resulting alternant codes.

More recently, Ndiaye~\cite{Ousmane2023} introduced quasi-centrosymmetric alternant codes corresponding to the particular involution induced by the projective transformation $\sigma(x)=x+a.$

The present work extends this construction by considering all conjugacy classes of involutions in the projective semilinear group $P\Gamma L_2(\mathbb{F}_q)$. This leads to a systematic classification of induced centrosymmetric alternant codes within a common automorphism-based framework.

At the level of generalized Reed--Solomon codes, our work starts from the same general setting as Berger \cite{Berger1999} and recovers the same four types of conjugacy classes of semilinear projective transformations. However, our approach is fundamentally different. Instead of relying on fixed-point arguments, we introduce a twisted conjugation action on $PGL_2(\mathbb{F}_q)$ and establish its correspondence with ordinary conjugacy in $P\Gamma L_2(\mathbb{F}_q)$. This correspondence provides an algebraic interpretation of these four types of conjugacy classes and makes Shintani's correspondence directly applicable. Building on this framework, we specialize to semilinear involutions and obtain a complete classification of induced centrosymmetric alternant codes.

\subsection{Our Contributions}

The main novelty of this work is the introduction of a twisted conjugation framework for studying semilinear automorphisms of generalized Reed--Solomon codes. We establish a correspondence between twisted conjugacy in $PGL_{2}(\mathbb{F}_{q})$
and ordinary conjugacy in $P\Gamma L_{2}(\mathbb{F}_{q}),$
which provides the algebraic framework for applying Shintani's correspondence to the classification of semilinear projective transformations and, subsequently, to induced centrosymmetric alternant codes.

More precisely, our contributions are summarized as follows.

\begin{itemize}

\item We introduce the twisted conjugation action $h\cdot_jf=h^{(-p^j)}fh^{-1}$,
naturally associated with semilinear projective transformations, and establish its correspondence with ordinary conjugacy in
$P\Gamma L_2(\mathbb F_q)$ through the map $P(f)=F^{-1}\circ f$.

\item We apply this correspondence together with Shintani's theorem to recover the four types of $\gamma_{p^j}$-similarity classes of semilinear projective transformations and specialize this framework to semilinear involutions.

\item We establish necessary and sufficient conditions for an alternant code to admit a centrosymmetric structure induced by a projective semilinear automorphism.

\item We obtain a complete classification of induced centrosymmetric alternant codes according to the conjugacy classes of involutions in
$P\Gamma L_2(\mathbb F_q)$, thereby unifying the previously known constructions within a common automorphism-based framework.

\end{itemize}

\subsection{Organization of the Paper}
The rest of the paper is organized as follows. In Section~\ref{sec:prelim}, we recall the necessary background on generalized Reed–Solomon codes, alternant codes, with a particular focus on their automorphism groups. In Section~\ref{sec:centro}, we present our main classification results for induced-centrosymmetric alternant codes.  Finally, Section~\ref{sec:conclusion} concludes the paper and discusses future research directions.

\section{Preliminaries} \label{sec:prelim}
Let $q=p^m$, where $p$ is a power of a prime number, $k$ and $n$ positive integers such that $2 \leq k < n \leq q$.\\
\subsection{Linear code}
\subsection{Automorphism group of a code}
\begin{definition}(\textbf{Permutation group})\\
		Let $\C$ be a linear code of  length $n$ on $\F_q$.
		 Let $\sigma \in S_n$ acting on $\F_q^n$ by $\forall c \in \F_q^n, \sigma(c)=(c_{\sigma^{-1}(0)},...,c_{\sigma^{-1}(n-1)})$.
		 The code $\C$ is said $\sigma$-invariant if $\sigma(\C)=\C$.
		  The permutation group of  $\C$ is:
		 $$Perm(\C)=\{ \sigma \in S_n | \sigma(\C)=\C\} $$
\end{definition}

More generally, if  $q=p^m$, the groupe $(\F_q^{*})^n\rtimes ( S_n \times Gal_{\F_p}(F_q))$ defined by the law  $(e, \sigma, \gamma) (\epsilon,\tau ,\pi ):= (d, \sigma \circ \tau ,\gamma\circ\pi,)$ with $d_i=e_i\gamma(c_{\sigma^{-1}(i)})$.
This group is called the semilinear isometry group of $\mathbb{F}_q^n$ with respect to the Hamming metric.

\begin{definition}(\textbf{semi-linear automorphism group of a linear code})\\
	Let $\C$ be a linear code of  length $n$ on $\F_q$. Let $\sigma \in S_n$, $\gamma \in Gal(\F_q)$ et $e \in (\F_q^{*})^n$ acting on $\F_q^n$ by $\forall c \in \F_q^n, (e,\sigma, \gamma)(c)=(\gamma(e_{\sigma^{-1}(0)})\gamma(c_{\sigma^{-1}(0)}),...,\gamma(e_{\sigma^{-1}(n-1)})\gamma(c_{\sigma^{-1}(n-1)}))$. The code $\C$ 
	is said $(e,\sigma, \gamma)$-invariant if $(e,\sigma, \gamma)(\C)=\C$. The semi-linear automorphism group of $\C$ is:
	$$\Gamma Aut(\C)=\{ (e,\sigma, \gamma) \in (\F_q^{*})^n\rtimes ( S_n \times Gal_{\F_p}(F_q)) ~|~ (e,\sigma, \gamma)(\C)=\C\} $$
\end{definition}

\begin{remark}~\\
	\begin{itemize}
		\item We define $Aut(\C)=\{ (e,\sigma) \in (\F_q^{*})^n\rtimes ( S_n) ~|~ (e,\sigma, Id_{\F_q})(\C)=\C\}$ as the linear automorphism group of $\C$ which is the monomial group of square matrix of order $n$ on $\F_q$, $M_n(\F_q^{*})$. if $q$ is prime $Aut(\C)= \Gamma Aut(\C)$ 
		\item $Perm(\C)\subseteq Aut(\C)\subseteq \Gamma Aut(\C)$.
		\item if $q=2$ alors $Perm(\C)=Aut(\C)=\Gamma Aut(\C)$
	\end{itemize}

\end{remark}

\begin{proposition}~\\

	$ \forall (e,\sigma, \gamma) \in (\F_q^{*})^n\rtimes ( S_n \times Gal_{\F_p}(F_q))$ then
	\begin{itemize}
		\item The map $(e,\sigma,\gamma)$ is a semilinear Hamming isometry, i.e., it preserves the Hamming weight.
		\item $(e,\sigma, \gamma)(\C)$ is a linear code of the same length, the same dimension et the same minimal distance than $\C$. 
	\end{itemize}
\end{proposition}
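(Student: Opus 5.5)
The plan is to reduce everything to coordinatewise facts about the three ingredients of $(e,\sigma,\gamma)$. For $c\in\F_q^n$, the $i$-th coordinate of $(e,\sigma,\gamma)(c)$ is $\gamma(e_{\sigma^{-1}(i)})\gamma(c_{\sigma^{-1}(i)})=\gamma(e_{\sigma^{-1}(i)}c_{\sigma^{-1}(i)})$. This is zero if and only if $c_{\sigma^{-1}(i)}=0$, for two reasons: $\gamma$ is a field automorphism, so it is injective and fixes $0$; and $e_{\sigma^{-1}(i)}\in\F_q^*$ is nonzero. The support of the image is therefore exactly $\sigma(\mathrm{supp}(c))$, and since $\sigma$ is a bijection of $\{0,\dots,n-1\}$ the weight is preserved. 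The same computation applied to $c-c'$ works once we know the map is additive. Additivity follows from additivity of $\gamma$ and the fact that multiplying by $e$ and permuting coordinates are additive. For scalars, $(e,\sigma,\gamma)(\lambda c)=\gamma(\lambda)\,(e,\sigma,\gamma)(c)$, so the map is $\gamma$-semilinear. Hence $d(\phi(c),\phi(c'))=\mathrm{wt}(\phi(c-c'))=\mathrm{wt}(c-c')=d(c,c')$ for $\phi=(e,\sigma,\gamma)$, which proves the first item.

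For the second item, set $\phi=(e,\sigma,\gamma)$. The image $\phi(\C)$ is closed under addition because $\phi$ is additive. For any $\mu\in\F_q$ we have $\mu\phi(c)=\phi(\gamma^{-1}(\mu)c)$, and $\gamma^{-1}(\mu)c$ lies in $\C$, so $\phi(\C)$ is closed under scalar multiplication; this uses that $\gamma$ is surjective. Thus $\phi(\C)$ is a linear subspace of $\F_q^n$, and it has length $n$.

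For the dimension, observe that $\phi$ is a bijection of $\F_q^n$, being a composition of invertible maps. Take a basis $c_1,\dots,c_k$ of $\C$. The images $\phi(c_1),\dots,\phi(c_k)$ span $\phi(\C)$ by the semilinearity identity above. They are also linearly independent: if $\sum\mu_i\phi(c_i)=0$, then $\phi(\sum\gamma^{-1}(\mu_i)c_i)=0$, so injectivity gives $\sum\gamma^{-1}(\mu_i)c_i=0$, hence every $\gamma^{-1}(\mu_i)=0$ and every $\mu_i=0$.

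The minimal distance of $\phi(\C)$ equals its minimal nonzero weight, which is preserved by the first item. The map $\phi$ is a bijection from $\C\setminus\{0\}$ onto $\phi(\C)\setminus\{0\}$, so the minimal nonzero weights of the two codes coincide.

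The only delicate point is the scalar-multiplication step. Because $\phi$ is semilinear rather than linear, closure of the image under scalars relies on $\gamma$ being surjective, and I would state this explicitly.
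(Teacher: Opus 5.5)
Your proof is correct and complete. The observation that $\gamma(e_{\sigma^{-1}(i)}c_{\sigma^{-1}(i)})=0$ exactly when $c_{\sigma^{-1}(i)}=0$ gives $\mathrm{supp}(\phi(c))=\sigma(\mathrm{supp}(c))$. Combined with additivity and $\phi(\lambda c)=\gamma(\lambda)\phi(c)$, this settles the first item. For the second, you rightly flag the surjectivity of $\gamma$, through $\mu\phi(c)=\phi(\gamma^{-1}(\mu)c)$, as the point that makes $\phi(\C)$ an $\F_q$-subspace. It also makes a basis of $\C$ map to a basis of $\phi(\C)$.

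There is nothing in the paper to compare against. The proposition is stated without proof, as a standard fact, and your argument is the routine verification it implicitly relies on.

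Your argument also uses only that each multiplier is nonzero. It is therefore unaffected by the paper's inconsistency between the group law, which uses multiplier $e_i$, and the action, which uses multiplier $\gamma(e_{\sigma^{-1}(i)})$.
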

$$\Gamma Aut(\C)=\Gamma Aut(\C^\bot)$$
$$Perm(\C)=Perm(\C^\bot)$$
\begin{proposition}
	Let $\sigma \in S_n$, $\C$ un linear code, $B$ a basis of  $\C$. if $\sigma(B)=B$ then  $\sigma \in Perm(\C)$
\end{proposition}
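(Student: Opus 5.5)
The argument runs through linearity of the action of $\sigma$ on $\F_q^n$ together with the fact that $B$ spans $\C$. First, $\sigma$ acts on $\F_q^n$ as a linear map. Indeed, the definition $\sigma(c)=(c_{\sigma^{-1}(0)},\dots,c_{\sigma^{-1}(n-1)})$ only rearranges coordinates, so $\sigma(\lambda c+\mu c')=\lambda\sigma(c)+\mu\sigma(c')$ for all $\lambda,\mu\in\F_q$ and $c,c'\in\F_q^n$. Concretely, $\sigma$ is given by a permutation matrix, and it is invertible with inverse $\sigma^{-1}$.

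Next, write $B=\{b_1,\dots,b_k\}$. Every $c\in\C$ can be written as $c=\sum_i \lambda_i b_i$, so linearity gives $\sigma(c)=\sum_i\lambda_i\sigma(b_i)$. By hypothesis $\sigma(B)=B$, hence each $\sigma(b_i)$ lies in $B\subseteq\C$. Since $\C$ is a linear subspace, it follows that $\sigma(c)\in\C$, and therefore $\sigma(\C)\subseteq\C$.

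For the reverse inclusion, I would not repeat the argument with $\sigma^{-1}$. Instead, note that $\sigma$ is injective, so $\sigma(\C)$ is a subspace of $\C$ with $\dim\sigma(\C)=\dim\C$. This is finite-dimensional linear algebra, and it forces $\sigma(\C)=\C$, i.e.\ $\sigma\in Perm(\C)$. Alternatively, $\sigma(B)=B$ with $B$ finite and $\sigma$ injective gives $\sigma^{-1}(B)=B$, and the argument above applied to $\sigma^{-1}$ also yields $\C\subseteq\sigma(\C)$.

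No step is a genuine obstacle. The only point that needs care is the reverse inclusion, and it is settled by the dimension count, which in turn rests on the injectivity (invertibility) of the permutation action.
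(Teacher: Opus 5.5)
Your proof is correct. The paper states this proposition without proof, so there is no argument of the paper's to compare against. Your argument is the standard one: the coordinate permutation is an invertible linear map on $\F_q^n$, and $B$ spans $\C$.

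One small streamlining is available. By linearity,
\[
\sigma(\C)=\sigma(\mathrm{span}\,B)=\mathrm{span}\,\sigma(B)=\mathrm{span}\,B=\C ,
\]
which gives both inclusions at once, so the separate reverse-inclusion step is not needed.

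Your route (first $\sigma(\C)\subseteq\C$, then a dimension count using injectivity) has its own merit. It shows that the weaker hypothesis $\sigma(B)\subseteq\C$ already implies $\sigma\in Perm(\C)$.
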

 
\subsection{GRS codes and Cauchy codes} 
Let $\textbf{x}=(x_0,...,x_{n-1}) \in \mathbb{F}_q^n$ such that $x_i$ distinct in $\mathbb{F}_q$, $\textbf{y}=(y_0,...,y_{n-1}) \in \mathbb{F}_q^n$ Such that $y_i$ non-zero scalar.
\begin{definition}(\textbf{Generelized Reed-Solomon code})\\
	The Generelized Reed-Solomon code $GRS_k(x,y)$ of dimension $k$ associated to $(x,y)$ is the $\mathbb{F}_q$-vectoriel  space of dimension $k$ défined by:
	$$ GRS_k(\textbf{x,y})=\{(y_0P(x_0),...,y_{n-1}P(x_{n-1})) | P \in \mathbb{F}_q[z]_{<k} \}$$
	$\textbf{x}$ is called support and $\textbf{y}$ multiplier of $GRS_k(\textbf{x,y})$.\\
	the Reed-Solomon code corresponds  to the case where $y_i=1, \forall$ $i \in \{0,..., n-1\}$ et is noted $RS_k(\textbf{x})$.
\end{definition}
To better treat the group of automorphisms of GRS codes A. Dur proposed in \cite{DUR198769} a projective generalization on $\mathbb{P}^1(\mathbb{F}_q)=\mathbb{F}_q\cup \{\infty\}$ of GRS codes which amounts to consider a support with and additional point: infinity. this new code obtained is called the Cauchy code $C_k(\textbf{x,y})$  defined from homogeneous bivariate polynomials of degree equal to $k-1$.
\begin{proposition}(Generator Matrix)
	Let $\textbf{x}$ and $\textbf{y}$ respectively support and multiplier from $\mathbb{F}_q$, then
	$$G=\left(
	\begin{array}{cccc}
		y_0 & y_1 & \cdots  & y_{n-1} \\
		y_0x_0 & y_1x_1 & \cdots& y_{n-1}x_{n-1} \\
		\vdots & \vdots &   & \vdots \\
		y_0x_0^{k-1} & y_1x_1^{k-1} & \cdots  & y_{n-1}x_{n-1}^{k-1} \\
	\end{array}
	\right)
	$$
	is a generator matrix de $C_k(\textbf{x,y})$.
\end{proposition}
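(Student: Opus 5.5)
The plan is to show two things about $G$: its rows lie in and span $C_k(\textbf{x,y})$, and they are linearly independent, so that $G$ has rank $k$. First I will reduce to the affine description. Since every $x_i$ lies in $\mathbb{F}_q$ (no point at infinity), each support point has homogeneous coordinates $(x_i:1)$. A homogeneous bivariate polynomial $Q(X,Z)$ of degree $k-1$ evaluated at $(x_i,1)$ gives $P(x_i)$, where $P(z)=Q(z,1)\in\mathbb{F}_q[z]_{<k}$. Dehomogenization $Q\mapsto Q(z,1)$ is a bijection from homogeneous forms of degree $k-1$ onto $\mathbb{F}_q[z]_{<k}$, with inverse $P\mapsto Z^{k-1}P(X/Z)$. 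Hence in this situation $C_k(\textbf{x,y})=GRS_k(\textbf{x,y})$, and it suffices to work with the GRS definition.

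Next, consider the $\mathbb{F}_q$-linear evaluation map
$$\mathrm{ev}:\mathbb{F}_q[z]_{<k}\to\mathbb{F}_q^n,\qquad P\mapsto (y_0P(x_0),\dots,y_{n-1}P(x_{n-1})).$$
Its image is $GRS_k(\textbf{x,y})$ by definition. The monomials $1,z,\dots,z^{k-1}$ form a basis of the domain, and their images are exactly the rows of $G$. Since a linear map sends a basis to a spanning set of its image, the rows of $G$ span the code.

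Finally, I will show that $\mathrm{ev}$ is injective, so the $k$ rows are independent. If $\mathrm{ev}(P)=0$, then $y_iP(x_i)=0$ for all $i$, and since every $y_i\neq 0$ this gives $P(x_i)=0$. Thus $P$ has $n>k-1\ge\deg P$ distinct roots, which forces $P=0$. Equivalently, the $k\times k$ submatrix of $G$ on the first $k$ columns equals a Vandermonde matrix times $\mathrm{diag}(y_0,\dots,y_{k-1})$. Its determinant $\prod_{i<k}y_i\prod_{i<j<k}(x_j-x_i)$ is nonzero, so $G$ has rank $k$ and is a generator matrix.

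I expect no real obstacle. The only delicate point is the first step: matching the homogeneous, Cauchy-code definition with the affine GRS description, using that no coordinate sits at $\infty$. If the intended normalization uses $(1:x_i)$ instead, the same argument applies after substituting $Q(1,z)$.
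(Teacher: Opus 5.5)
Your proof is correct and complete. The paper states this proposition without any proof, as a standard fact right after its informal introduction of Cauchy codes, so there is no competing argument to compare with. Your route is the standard justification that fills this gap: send the monomial basis $1,z,\dots,z^{k-1}$ through the evaluation map to get the rows of $G$, then get injectivity from root counting, using $y_i\neq 0$, distinct $x_i$ and $n\ge k$ (equivalently, from the nonvanishing Vandermonde minor).

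Your first step is worth keeping. The statement names the Cauchy code $C_k(\textbf{x,y})$ but displays the generator matrix of $GRS_k(\textbf{x,y})$. The two codes coincide exactly because every $x_i$ lies in $\mathbb{F}_q$ and is represented by $(x_i:1)$. Under D\"ur's convention, a support point at $\infty$ would instead contribute the column $(0,\dots,0,y_i)^T$, and $G$ as written would not apply.
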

\subsection{Automorphism group of Cauchy code}
A. Dur, in \cite{DUR198769}, was able to establish the automorphism group from the semi-linear Projective $P\Gamma L_2(\mathbb{F}_q)=PGL_2(\mathbb{F}_q)\rtimes Gal(\mathbb{F}_q)$ group acting on $\mathbb{P}^1(\mathbb{F}_q)$.
GRS codes with symmetries can easily be constructed by looking at the action on the support and the multipliers of $PGL_2(\mathbb{F}_q)$. 
In the following for more simplicity, we will adopt the notations in \cite{Berger1999}.
For all $f=\left(
\begin{array}{cc}
	a & b \\
	c & d \\
\end{array}
\right) \in GL_2(\mathbb{F}_q)$ by projection on $PGL_2(\mathbb{F}_q)$ $f$ induces a homography defined by 

$$\overline{f}(z)=\frac{az+b}{cz+d}$$
as defined in \cite{Berger1999}, we associate for any $f\in GL_2(\mathbb{F}_q)$ a function $u_f: \mathbb{P}^1(\mathbb{F}_q) \rightarrow \mathbb{F}_q$
\begin{equation}
	u_f(z)=
	\begin{cases}
		cz+d, & \text{if}\ z\in\mathbb{F}_q  ~and~ cz+d\neq 0 \\
		az+b, & \text{if}\ z\in\mathbb{F}_q ~and~ cz+d= 0 \\
		c, & \text{if}\ z=\infty ~and~ c\neq 0 \\ 
		a, & \text{if}\ z=\infty ~and~ c= 0 \\
	\end{cases}
\end{equation} 

\begin{proposition} \label{theo_ufg}
Let $f,g\in GL_2(\mathbb F_q)$ and every
$0\le j\le m-1$, the following identities hold:
\begin{align}
u_{fg}(z)
&=
u_f(\overline g(z))\,u_g(z),
\label{eq:ufg}
\\
u_f(\gamma_{p^j}(z))
&=
\gamma_{p^j}\!\left(
u_{f^{(-p^j)}}(z)
\right).
\label{eq:ufrobenius}
\end{align}    
\end{proposition}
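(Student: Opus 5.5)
The plan is to avoid a case-by-case verification of \eqref{eq:ufg} by first giving $u_f$ an intrinsic meaning in homogeneous coordinates. For $w\in\mathbb P^1(\mathbb F_q)$, let $v(w)\in\mathbb F_q^2$ be the normalized representative: $v(w)=(w,1)^T$ if $w\in\mathbb F_q$, and $v(\infty)=(1,0)^T$. I will first show the key characterization
\[
f\,v(z)=u_f(z)\,v(\overline f(z)) \qquad\text{for all } z\in\mathbb P^1(\mathbb F_q),\ f\in GL_2(\mathbb F_q).
\]
This is the step where the four cases of the definition of $u_f$ are used. Write $f=\begin{pmatrix}a&b\\ c&d\end{pmatrix}$.
\begin{itemize}
\item If $z\in\mathbb F_q$ and $cz+d\neq0$, then $f(z,1)^T=(az+b,cz+d)^T=(cz+d)\,(\overline f(z),1)^T$.
\item If $cz+d=0$, then $f(z,1)^T=(az+b,0)^T=(az+b)\,v(\infty)$. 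Here $az+b\neq0$ because $f$ is invertible, and $\overline f(z)=\infty$.
\item For $z=\infty$, we have $f(1,0)^T=(a,c)^T$. This equals $c\,(a/c,1)^T$ when $c\neq0$, and $a\,(1,0)^T$ when $c=0$.
\end{itemize}
In every case the scalar is exactly $u_f(z)$, so $u_f(z)$ is the unique nonzero scalar $\lambda$ with $f\,v(z)=\lambda\,v(\overline f(z))$. It is unique because $v(\cdot)$ is a fixed nonzero normalized vector.

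With this in hand, \eqref{eq:ufg} follows formally. We compute
\[
fg\,v(z)=f\bigl(u_g(z)\,v(\overline g(z))\bigr)=u_g(z)\,u_f(\overline g(z))\,v\bigl(\overline f(\overline g(z))\bigr).
\]
Since $\overline{fg}=\overline f\circ\overline g$ (projection to $PGL_2$ is a homomorphism), uniqueness of the scalar gives $u_{fg}(z)=u_f(\overline g(z))\,u_g(z)$.

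For \eqref{eq:ufrobenius}, I read $f^{(-p^j)}$ as the matrix obtained by applying $\gamma_{p^j}^{-1}$ entrywise. Its entries are $a'=a^{p^{-j}}$, $b'=b^{p^{-j}}$, $c'=c^{p^{-j}}$, $d'=d^{p^{-j}}$, where $p^{-j}$ means $p^{m-j}$, so that $\gamma_{p^j}(a')=a$, and so on. Since $\gamma_{p^j}$ is a field automorphism fixing $\infty$, we have for $z\in\mathbb F_q$
\[
c\,\gamma_{p^j}(z)+d=\gamma_{p^j}(c'z+d'),\qquad a\,\gamma_{p^j}(z)+b=\gamma_{p^j}(a'z+b').
\]
Moreover, $cz^{p^j}+d=0$ iff $c'z+d'=0$, and $c=0$ iff $c'=0$. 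Hence the case selected in the definition of $u_f$ at $\gamma_{p^j}(z)$ is the same as the case selected for $u_{f^{(-p^j)}}$ at $z$, and in each case the value is the Frobenius image of the corresponding value. Equivalently, applying $\gamma_{p^j}$ coordinatewise to $f^{(-p^j)}v(z)=u_{f^{(-p^j)}}(z)\,v(\overline{f^{(-p^j)}}(z))$ and using $\gamma_{p^j}(v(w))=v(\gamma_{p^j}(w))$, the characterization above yields the identity directly.

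The only delicate point is the bookkeeping of the degenerate cases: showing $u_f(z)\neq0$, and fixing the correct convention for $f^{(-p^j)}$. The homogeneous-coordinate characterization handles the former uniformly, and the latter I settle by the explicit matching of cases.
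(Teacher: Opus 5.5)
Your proof is correct. For \eqref{eq:ufrobenius} it follows the paper's own route: the paper also matches the four cases of the definition of $u_f$ at $\gamma_{p^j}(z)$ with those of $u_{f^{(-p^j)}}$ at $z$, rewriting each value as $\gamma_{p^j}$ of the corresponding one (e.g.\ $c\,z^{p^j}+d=\gamma_{p^j}(c^{-p^j}z+d^{-p^j})$).

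The real difference is \eqref{eq:ufg}. The paper does not prove it; it only cites \cite[p.~86]{urvoy2015}. You instead derive it from the lifting identity $f\,v(z)=u_f(z)\,v(\overline f(z))$, which identifies $u_f(z)$ as the scalar by which $f$ moves the normalized homogeneous representative of $z$. This buys several things:
\begin{itemize}
\item The proposition becomes self-contained and essentially case-free: \eqref{eq:ufg} is a formal consequence of $(fg)v=f(gv)$.
\item The non-vanishing of $u_f$ and the behaviour at $\infty$ are handled uniformly.
\item \eqref{eq:ufrobenius} also follows without cases, by applying $\gamma_{p^j}$ entrywise, since it commutes with matrix--vector products and preserves normalized representatives.
\item The same computation gives $\gamma_{p^j}\circ\overline{f^{(-p^j)}}=\overline f\circ\gamma_{p^j}$. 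This is the rule $F\circ h=F(h)\circ F$ that the paper recomputes by hand in Lemma~\ref{lemmegamma} and invokes in Theorem~\ref{lem:semilinear-conjugacy}.
\end{itemize}
The paper's route is shorter on the page. However, its explicit case display is where index slips creep in: one condition is written as $(c^{-p^j}z^{p^j}+d^{-p^j})^{p^j}\neq0$ instead of $(c^{-p^j}z+d^{-p^j})^{p^j}\neq0$. Your formulation avoids this.
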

\begin{proof}
For the identity \eqref{eq:ufg}, see \cite[p.~86]{urvoy2015}. We now proof the relation \ref{eq:ufrobenius}\\
    Suppose $f=\left(
\begin{array}{cc}
	a & b \\
	c & d \\
\end{array}
\right) \in GL_2(\mathbb{F}_q)$,we have $\gamma_{p^j}(x)=x^{p^j}$.\\
\begin{align}
    u_f(\gamma_{p^j}(z)) &=
	\begin{cases}
		cz^{p^j} + d, & \text{if}\ z^{p^j} \in\mathbb{F}_q  ~and~ c(z^{p^j} + d \neq 0 \\
		az^{p^j} + b, & \text{if}\ z^{p^j} \in\mathbb{F}_q ~and~ cz^{p^j} + d = 0 \\
		c, & \text{if}\ z^{p^j}=\infty ~and~ c\neq 0 \\ 
		a, & \text{if}\ z^{p^j}=\infty ~and~ c= 0 \\
	\end{cases} \\
    & = \begin{cases}
		cz^{p^j} + d, & \text{if}\ z^{p^j} \in\mathbb{F}_q  ~and~ (c^{-p^j}z^{p^j} + d^{-p^j})^{p^j} \neq 0 \\
		az^{p^j} + b, & \text{if}\ z^{p^j} \in\mathbb{F}_q ~and~ (c^{-p^j}(z^{p^j} + d^{-p^j})^{p^j} = 0 \\
		(c^{-p^j})^{p^j}, & \text{if}\ z^{p^j}=\infty ~and~ (c^{-p^j})^{p^j} \neq 0 \\ 
		(a^{-p^j})^{p^j}, & \text{if}\ z^{p^j}=\infty ~and~  (c^{-p^j})^{p^j} = 0 \\
	\end{cases} \\
    &= 
    \begin{cases}
		\gamma_{p^j}(c^{-p^j}z + d^{-p^j}), & \text{if}\ z \in\mathbb{F}_q  ~and~ c^{-p^j}z + d^{-p^j} \neq 0 \\
		\gamma_{p^j}(a^{-p^j}z + b^{-p^j}), & \text{if}\ z \in\mathbb{F}_q ~and~ c^{-p^j}z + d^{-p^j} = 0 \\
		\gamma_{p^j}(c^{-p^j}), & \text{if}\ z=\infty ~and~ c^{-p^j} \neq 0 \\ 
		\gamma_{p^j}(a^{-p^j}), & \text{if}\ z=\infty ~and~  c^{-p^j} = 0 \\
	\end{cases}
     \\
    &= \gamma_{p^j}(u_{f^{(-p^j)}}(z))
\end{align}
\end{proof}

\begin{theorem}[\cite{DUR198769}, Theorem 4] \label{dur} The Cauchy codes $C_k(x,y)$ and $C_k(\alpha,\beta)$ are equivalent
	if it exists $f \in GL_2(\mathbb{F}_q)$  , $ \lambda \in \mathbb{F}_q^{*}$ such that $\alpha_i=\overline{f}(x_i)$ and $\beta_i=\lambda u_f(x_i)^{k-1}y_i$.\\
\end{theorem}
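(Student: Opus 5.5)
The plan is to show that the monomial map determined by $(f,\lambda)$ sends a basis of $C_k(x,y)$ onto a basis of $C_k(\alpha,\beta)$, up to an invertible change of basis. Equivalence then follows, since a monomial transformation is a Hamming isometry by the earlier Proposition.

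\textbf{Step 1: polynomial description.} Following Dur, I describe $C_k(x,y)$ through homogeneous bivariate polynomials $P(X,Y)$ of degree $k-1$. For $z\in\mathbb{F}_q$ the coordinate is $y_iP(z,1)$, and for $z=\infty$ it is $y_iP(1,0)$. With this convention the generator matrix in the earlier Proposition corresponds to the monomial basis $X^tY^{k-1-t}$.

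\textbf{Step 2: action of $f$.} Let $f=\begin{pmatrix}a&b\\c&d\end{pmatrix}$. Define $P^f(X,Y)=P(aX+bY,\,cX+dY)$. This map is a linear automorphism of the space of homogeneous forms of degree $k-1$, because $f$ is invertible.

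The key local identity is
\[
P\bigl(\overline f(z),1\bigr)\,u_f(z)^{k-1}=P^f(z,1).
\]
I verify it case by case from the definition of $u_f$:
\begin{itemize}
\item If $cz+d\neq 0$, then $(az+b,\,cz+d)=u_f(z)\,(\overline f(z),1)$. Homogeneity of degree $k-1$ gives the identity directly.
\item If $cz+d=0$, then $\overline f(z)=\infty$ and $(az+b,\,0)=u_f(z)\,(1,0)$. The same identity holds with the convention for $\infty$.
\item The two cases $z=\infty$ (according as $c\neq0$ or $c=0$) are checked in the same way, using the vector $(a,c)$ in place of $(az+b,cz+d)$.
\end{itemize}
This four-case bookkeeping at the points where $\overline f(z)=\infty$ or $z=\infty$ is the only delicate part. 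It is exactly why $u_f$ is defined piecewise.

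\textbf{Step 3: conclusion.} Take $\alpha_i=\overline f(x_i)$ and $\beta_i=\lambda u_f(x_i)^{k-1}y_i$. For any form $P$,
\[
\beta_iP(\alpha_i)=\lambda\, y_i\,P^f(x_i),
\]
with the homogeneous evaluation convention. Since $P\mapsto \lambda P^f$ is bijective on forms of degree $k-1$, the two codes coincide as sets of vectors once coordinates are matched by the indexing $\alpha_i\leftrightarrow x_i$.

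If the roles are instead viewed as $\alpha$ being a permutation of the support, the codes are permutation-equivalent. In either case they are equivalent.

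The main obstacle is the consistent handling of the point at infinity in Step 2. I would treat it first by writing every point of $\mathbb{P}^1$ as a normalized vector, so that $u_f(z)$ is precisely the scalar relating $f$ applied to the normalized vector of $z$ to the normalized vector of $\overline f(z)$.
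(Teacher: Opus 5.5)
Your argument is correct. The paper gives no proof of this statement; it only quotes Dur. Your route is the standard argument behind Dur's result: pull back homogeneous forms of degree $k-1$ along $f$, and read $u_f(z)$ as the scalar with $f\,\hat z=u_f(z)\,\widehat{\overline f(z)}$ for the normalized representatives $\hat z=(z,1)$, $\hat\infty=(1,0)$. Your four-case check of $u_f(z)^{k-1}P(\widehat{\overline f(z)})=P^f(\hat z)$ is right in every case.

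Note that you prove more than equivalence. You obtain $C_k(\alpha,\beta)=C_k(x,y)$ with the same indexing. That equality, not mere equivalence, is what the paper uses in the proof of Lemma~\ref{lemmegamma}, where it writes $\mathcal C=GRS_k(u;v)$ with $u_i=\overline h(x_i)$ and $v_i=u_h(x_i)^{k-1}y_i$. Your closing remark about permutation-equivalence is therefore unnecessary.

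One point deserves an explicit sentence: $(\alpha,\beta)$ is a legitimate support/multiplier pair. The $\alpha_i$ are distinct because $\overline f$ is a bijection of $\mathbb{P}^1(\mathbb{F}_q)$. Moreover $u_f(z)\neq 0$, since $f\hat z\neq 0$ by invertibility of $f$ (for instance, $cz+d=0$ forces $az+b\neq 0$). Hence every $\beta_i\neq 0$.
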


While Theorem~\ref{dur} characterizes equivalence between Cauchy codes via projective linear transformations, the description of their automorphism groups requires the more general framework of projective semilinear transformations. Following Berger \cite{Berger1999}, we therefore introduce the notion of a permutation induced by an element of $P\Gamma L_2(\mathbb{F}_q)$.
\begin{definition}
Let $C_k(x,y)$ be a Cauchy code with support
\[
L_x=\{x_0,\ldots,x_{n-1}\},
\]
where $n\ge3$, and let $F=\gamma_{p^j}\circ\overline f
\in
P\Gamma L_2(\mathbb F_q).$

We say that $F$ induces a permutation on $L_x$ if $F(L_x)=L_x.$\\
The induced permutation $\sigma\in S_n$
is then defined by $F(x_i)=x_{\sigma(i)},
\:
0\le i<n.$
\end{definition}

\begin{theorem}[\cite{Berger1999}, Theorem 1.11] \label{berger} Let $C_k(x,y)$ be a Cauchy code and $2\leq k\leq n-1$. 
	$(e,\sigma, \gamma_{p^j}) \in \Gamma Aut(C_k(x,y))$ if and only if there exists  $F=\gamma_{p^j} \circ \overline{f} \in P\Gamma L_2(\mathbb{F}_q)$ such that:
	\begin{itemize}
		\item $F(L_x)=L_x$ and $\sigma$ induced by $F$.
		\item there exists $\lambda \in \mathbb{F}_q^{*}$, such that 
		$e_i=\lambda y_{\sigma^{-1}(i)}^{p^{m-j}}(u_f(x_i)^{k-1}y_i)^{-1}$
	\end{itemize}

\end{theorem}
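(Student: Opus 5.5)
The statement to prove is Theorem~\ref{berger}. It is an equivalence, so I will prove each direction separately. In both directions I work with the generator matrix of $C_k(x,y)$ given by the evaluations $y_iP(x_i)$, with $P$ homogeneous of degree $k-1$ (so that the point $\infty$ is allowed). The code equivalence of Theorem~\ref{dur} and the cocycle identities of Proposition~\ref{theo_ufg} are the main tools.

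\emph{Sufficiency.} Let $F=\gamma_{p^j}\circ\overline f$ with $F(L_x)=L_x$ and induced permutation $\sigma$, and let $e$ be defined as in the statement. Take a codeword $c=(y_iP(x_i))_i$. I compute the image $(e,\sigma,\gamma_{p^j})(c)$ coordinatewise and write its $i$-th entry in terms of $x_{\sigma^{-1}(i)}$. Since $x_i=F(x_{\sigma^{-1}(i)})$, the evaluation point is related to $x_{\sigma^{-1}(i)}$ by $\overline f$ and Frobenius. I then apply identity \eqref{eq:ufrobenius} to move $\gamma_{p^j}$ past $u_f$, which turns $f$ into $f^{(-p^j)}$. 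Next I use \eqref{eq:ufg} with $g=f^{-1}$ to rewrite $u_f(\cdot)^{k-1}$ evaluated at the image point as the inverse of $u_{f^{-1}}(\cdot)^{k-1}$ at the preimage point. The factor $u_f(\cdot)^{k-1}$ is exactly what homogenizes $P\circ\overline{f}^{-1}$ into a polynomial $\tilde P$ of degree $\le k-1$, as in Dur's proof of Theorem~\ref{dur}. Applying Frobenius to the coefficients, $\tilde P^{\gamma}$ is still of degree $\le k-1$. The $y$-factors then cancel by the choice of $e_i$, and the image is $\lambda\,(y_i\tilde P^{\gamma}(x_i))_i\in C_k(x,y)$. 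Finally, semilinear isometries preserve dimension, so the image of the code is the code itself.

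\emph{Necessity.} Let $(e,\sigma,\gamma_{p^j})\in\Gamma Aut(C_k(x,y))$. Undoing the Galois part, the code $C_k(x,y)$ is monomially equivalent to $C_k(\gamma_{p^j}(x),\gamma_{p^j}(y))$ permuted by $\sigma$, with multiplier vector $e$. So two Cauchy codes with supports $L_x$ and $\gamma_{p^j}(x)_{\sigma}$ are monomially equivalent. I then invoke the converse of Theorem~\ref{dur}, namely Dur's full characterization of monomial equivalence of Cauchy codes for $2\le k\le n-1$. It supplies $f'\in GL_2(\mathbb F_q)$ mapping one support to the other and a scalar $\lambda$ with multipliers related by $u_{f'}^{k-1}$. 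Setting $F=\gamma_{p^j}\circ\overline f$ for the appropriate $f$ (a Frobenius twist of $f'$, again using \eqref{eq:ufrobenius}), I get $F(L_x)=L_x$ with $F$ inducing $\sigma$. Solving the multiplier relation for $e_i$ yields the stated formula. The exponent $p^{m-j}$ appears because the inverse of $\gamma_{p^j}$ is $\gamma_{p^{m-j}}$.

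\emph{Main obstacle.} The delicate step is the necessity direction. There I need Theorem~\ref{dur} as an ``if and only if'', which relies on $2\le k\le n-1$ so that the code determines its support up to $PGL_2$. I must also track carefully how $\gamma_{p^j}$ interacts with $u_f$ and with the index shift by $\sigma^{-1}$, so that the exponents and the placement of $y_{\sigma^{-1}(i)}$ come out exactly as stated. I would first check the formula on the generator rows $P=z^t$, and on the cases $c=0$ and $c\neq 0$ (including the point $\infty$), before writing out the general computation.
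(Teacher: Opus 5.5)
\textbf{The necessity direction has a genuine gap, and it is the step you flagged.} The paper gives no proof of this statement (it is quoted from Berger), so I judge your plan on its own terms.

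Your sufficiency half is the standard argument and works for every $k$: homogenize $P\circ\overline f^{-1}$ with the factor $u^{k-1}$ and move $\gamma_{p^j}$ across $u_f$ using \eqref{eq:ufg} and \eqref{eq:ufrobenius}.

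The necessity half rests entirely on the converse of Theorem~\ref{dur}, which the paper does not state. You claim this converse holds for $2\le k\le n-1$ because ``the code determines its support up to $PGL_2$''. That is false for $k=n-1\ge 3$:
\begin{itemize}
\item The code $C_{n-1}(x,y)$ is MDS, so its dual is an $[n,1,n]$ code. Hence $C_{n-1}(x,y)=\{c:\sum_i w_ic_i=0\}$ with all $w_i\neq 0$.
\item For \emph{every} $\sigma\in S_n$, the choice $e_i=w_i/w_{\sigma(i)}$ gives $(e,\sigma,\gamma_{p^0})\in\Gamma Aut(C_{n-1}(x,y))$, because $\sum_l w_l e_{\sigma^{-1}(l)}c_{\sigma^{-1}(l)}=\sum_i w_ic_i$.
\item Take $\sigma$ a transposition and $n\ge 5$. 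Any $F=\gamma_{p^0}\circ\overline f$ inducing $\sigma$ fixes $n-2\ge 3$ points of $\mathbb P^1(\mathbb F_q)$, so it is the identity, a contradiction.
\end{itemize}
So the ``only if'' direction, and the statement as transcribed, is false for $k=n-1$. D\"ur's rigidity theorem needs $2\le k\le n-2$, and your proof has to carry that hypothesis; no bookkeeping rescues $k=n-1$. Under that hypothesis your reduction goes through: undo $\gamma_{p^j}$ to get an equality $C_k(x,y)=C_k(x',y')$ with $x'_l=\gamma_{p^j}(x_{\sigma^{-1}(l)})$, apply D\"ur, then twist the resulting matrix back by $(\cdot)^{(-p^j)}$.

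\textbf{A smaller point: fix conventions before asserting that ``the $y$-factors cancel''.} The preliminaries use $F(x_i)=x_{\sigma(i)}$ and $(e,\sigma,\gamma)(c)_l=\gamma(e_{\sigma^{-1}(l)}c_{\sigma^{-1}(l)})$. With these, the entry $\gamma_{p^j}(e_iy_iP(x_i))$ lands in position $\sigma(i)$. Homogenizing with $g=(f^{-1})^{(p^j)}$ and using $u_g(x_{\sigma(i)})=\gamma_{p^j}(u_f(x_i))^{-1}$, your computation yields
\[
e_i=\lambda\,y_{\sigma(i)}^{p^{m-j}}\bigl(u_f(x_i)^{k-1}y_i\bigr)^{-1},
\]
with $\sigma(i)$, not $\sigma^{-1}(i)$. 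The stated formula corresponds to replacing $\sigma$ by $\sigma^{-1}$ in these conventions.

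The two versions agree when $\sigma^2=1$, which is the only case used in Theorem~\ref{tho_3}, but not in general. For example, take $x_i=i\in\mathbb F_p$ with $p\ge 5$, $\overline f(z)=z+1$, and a multiplier $y$ for which $y_{i-2}/y_i$ is not constant. Then the stated $e$ maps $C_k(x,y)$ onto $C_k(x,(y_{i-2})_i)\neq C_k(x,y)$. Your sufficiency computation therefore proves the $\sigma(i)$ version, not the statement as written.
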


\section{Induced Centrosymmetric Alternant Codes} \label{sec:centro}
Alternant codes are obtained as subfield subcodes of Cauchy (or generalized Reed--Solomon) codes. Therefore, the characterization of automorphism groups for Cauchy codes given by Theorem~\ref{berger} naturally extends to alternant codes through the subfield subcode construction. We first recall the definition of a subfield subcode.
\begin{definition}(Subfield Subcode)\\
	Let $\C \subset \F_{q}^n$ un $[n,k,d]$-code on $\F_{q}$. The subfield subcode  of $\C$ on the subfield $\F_p$, noted $\C_{|\F_p}$ is the subcode of $\C$ whose elements are in $\F_p^n$.
	$$\C_{|\F_p}= \C \cap \F_p^n.$$
\end{definition}
Alternant code are defined as Subfield Subcode of Dual of GRS code.
\begin{definition}(Alternant code)\\
	The Alternant code $A_t(x,y)$ associated to $(x,y)$ is the $\mathbb{F}_p$-vectoriel défined by:
	$$ A_t(\textbf{x,y})=(GRS_t(x,y)^{\bot})_{|\F_p}=GRS_t(x,y)^{\bot}\cap \F_p^n$$
\end{definition} 
\begin{lemma} Suppose that there exists
	$(e,\sigma, \gamma) \in \Gamma Aut(\C)$ such that $e=\lambda(1,..,1)$ then $\sigma \in \C_{|\F_p}.$
\end{lemma}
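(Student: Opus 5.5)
The statement should be read as $\sigma \in Perm(\C_{|\F_p})$. The claim is that when the monomial part of a semilinear automorphism is a constant scalar, the underlying permutation preserves the subfield subcode. The plan is to take an arbitrary $c \in \C_{|\F_p} = \C \cap \F_p^n$, show $\sigma(c) \in \C \cap \F_p^n$, and then conclude by finiteness.

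\textbf{Step 1: apply the automorphism.} Since $(e,\sigma,\gamma) \in \Gamma Aut(\C)$ and $c \in \C$, we have $(e,\sigma,\gamma)(c) \in \C$. By the definition of the action, its $i$-th coordinate is $\gamma(e_{\sigma^{-1}(i)})\gamma(c_{\sigma^{-1}(i)})$. With $e = \lambda(1,\dots,1)$ this equals $\gamma(\lambda)\gamma(c_{\sigma^{-1}(i)})$.

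\textbf{Step 2: use that $\gamma$ fixes $\F_p$.} Every element of $Gal_{\F_p}(\F_q)$ is a power of Frobenius and fixes $\F_p$ pointwise, and each $c_{\sigma^{-1}(i)} \in \F_p$. Hence $\gamma(c_{\sigma^{-1}(i)}) = c_{\sigma^{-1}(i)}$, and so
\[
(e,\sigma,\gamma)(c) = \gamma(\lambda)\,\sigma(c).
\]
Here $\gamma(\lambda) \in \F_q^{*}$ is a single nonzero scalar.

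\textbf{Step 3: remove the scalar.} Since $\C$ is $\F_q$-linear, multiplying by $\gamma(\lambda)^{-1}$ gives $\sigma(c) \in \C$. The vector $\sigma(c)$ only permutes the coordinates of $c$, so $\sigma(c) \in \F_p^n$. Therefore $\sigma(c) \in \C_{|\F_p}$, that is, $\sigma(\C_{|\F_p}) \subseteq \C_{|\F_p}$.

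\textbf{Step 4: conclude.} The permutation $\sigma$ is injective and $\C_{|\F_p}$ is finite, so the inclusion is an equality. Hence $\sigma \in Perm(\C_{|\F_p})$.

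There is no real obstacle; the only point needing care is the placement of $\gamma$ on the multiplier. For this reason the argument keeps the scalar $\gamma(\lambda)$ whole rather than separating $\lambda$ from $\gamma$, which makes it independent of the exact convention for the action.
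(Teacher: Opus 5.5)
Your argument is correct and complete. The paper states this lemma without proof, so there is no argument of theirs to compare against; yours is the natural one. For $c\in\F_p^n$ the constant multiplier collapses $(e,\sigma,\gamma)(c)$ to the single scalar $\gamma(\lambda)$ times $\sigma(c)$. $\F_q$-linearity then removes the scalar, and finiteness upgrades $\sigma(\C_{|\F_p})\subseteq\C_{|\F_p}$ to equality. Your reading of the conclusion as $\sigma\in Perm(\C_{|\F_p})$ is clearly the intended one.

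The hypothesis doing the real work is that $\gamma$ fixes $\F_p$ pointwise. This holds because $\gamma$ is taken in $Gal_{\F_p}(\F_q)$, and it is worth saying so explicitly, since the paper allows $p$ to be a prime power. For an arbitrary field automorphism of $\F_q$, such as $x\mapsto x^2$ on $\F_{16}$ with $p=4$, your Step~2 would fail. In that case you would only get that $c\mapsto\gamma(\sigma(c))$ preserves $\C_{|\F_p}$, not $\sigma$ itself.
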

In this following we will use this lemma to build induced alternant codes having generator (or Parity check) matrix in a centrosymmetric form.
\begin{definition}
A matrix is said to be \emph{centrosymmetric} if it is symmetric with respect to its center. More precisely, a matrix
\[
M=(m_{ij}) \in \mathbb{F}_q^{k\times n}
\]
is centrosymmetric if
\[
m_{ij}=m_{k-1-i,\;n-1-j},
\]
for all
\[
0\le i\le k-1,\qquad
0\le j\le n-1.
\]
\end{definition}

\begin{theorem}\label{tho_3}
Let $A_t(x,y)$ be an alternant code of length $n$.

Then $A_t(x,y)$ is induced centrosymmetric if and only if there exists an element
\[
F=\gamma_{p^j}\circ\overline{f}\in P\Gamma L_2(\mathbb{F}_q)
\]
of order $2$ satisfying the following conditions.

\begin{enumerate}
\item The support $x$ is the union of the $F$-orbits. More precisely,

\begin{itemize}
\item if $n$ is even, then
\[
x=\bigcup_{i=0}^{\frac n2-1}\mathcal O_F(x_i);
\]

\item if $n$ is odd, then
\[
x=\left(\bigcup_{i=0}^{\frac{n-3}2}\mathcal O_F(x_i)\right)
\cup
\{x_{\frac{n-1}2}\},
\]
where $x_{\frac{n-1}2}$ is the unique fixed point of $F$.
\end{itemize}

Equivalently,
\[
x_{n-1-i}=F(x_i),
\qquad
0\le i<n.
\]

\item The multiplier satisfies
\[
y_{n-1-i}
=
\left(
\lambda\,u_f(x_i)^{\,k-1}y_i
\right)^{p^j},
\qquad
0\le i<n,
\]
for some $\lambda\in\mathbb{F}_q^\times$.
\end{enumerate}
\end{theorem}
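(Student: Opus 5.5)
The proof will go through the permutation $\rho:i\mapsto n-1-i$. By definition, $A_t(x,y)$ is induced centrosymmetric when $\rho$ is the permutation part of a semilinear automorphism $(e,\rho,\gamma_{p^j})$ of the underlying Cauchy code with constant scaling $e=\lambda'(1,\dots,1)$. Such an automorphism descends to the subfield subcode by the lemma preceding the definition of centrosymmetric matrices. It then forces a generator (or parity-check) matrix that is invariant under reversing the order of columns, which is the centrosymmetric form. The strategy is to translate this condition through Theorem~\ref{berger}, using $\Gamma Aut(\C)=\Gamma Aut(\C^\bot)$ to pass between $GRS_t(x,y)$ and its dual. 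The two conditions of Theorem~\ref{tho_3} should then appear exactly as the two bullet points of that theorem, specialised to $\sigma=\rho$.

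\emph{Necessity.} Assume $(e,\rho,\gamma_{p^j})\in\Gamma Aut(C_k(x,y))$ with $e$ constant. Theorem~\ref{berger} supplies $F=\gamma_{p^j}\circ\overline f$ with $F(L_x)=L_x$ and $F(x_i)=x_{\rho(i)}=x_{n-1-i}$, which is the ``equivalently'' form of condition~1.

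Since $\rho^2=\mathrm{id}$, the map $F^2$ fixes every point of $L_x$. Because $n\ge3$, an element of $P\Gamma L_2(\mathbb F_q)$ fixing three distinct points is trivial. I will need to check this carefully in the semilinear case: $F^2=\gamma_{p^{2j}}\circ g$, and fixing three points forces $g$ to be the identity modulo the Frobenius twist. Granting this, $F^2=\mathrm{id}$ on $\mathbb P^1$ in the appropriate sense, so $F$ has order $2$; the case $F=\mathrm{id}$ is excluded because $\rho\ne \mathrm{id}$ when $n\ge 2$.

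The orbits of $F$ on $L_x$ are the pairs $\{x_i,x_{n-1-i}\}$, together with the single fixed point $x_{(n-1)/2}$ when $n$ is odd. Since $\rho$ has exactly one fixed point in that case, $F$ has a unique fixed point in $L_x$. This gives the orbit decomposition in condition~1.

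For condition~2, the constancy of $e$ in the formula $e_i=\lambda y_{\rho^{-1}(i)}^{p^{m-j}}(u_f(x_i)^{k-1}y_i)^{-1}$ gives $y_{n-1-i}^{p^{m-j}}=\lambda'' u_f(x_i)^{k-1}y_i$. Raising both sides to the power $p^j$ yields the stated identity, after renaming the constant.

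\emph{Sufficiency.} Given $F$ of order $2$ satisfying conditions 1 and 2, reverse these computations. Condition 1 shows that $F$ induces $\rho$ on $L_x$. Condition 2 makes the vector $e$ of Theorem~\ref{berger} constant. Theorem~\ref{berger} then places $(e,\rho,\gamma_{p^j})$ in $\Gamma Aut$ of the Cauchy code, hence of its dual, and the lemma transfers $\rho$ to $A_t(x,y)$. Taking a basis $b$ and the reversed vectors $\rho(b)$ as rows, suitably arranged, gives a centrosymmetric generator matrix.

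The main obstacle I expect is bookkeeping rather than ideas. First, the exponent conventions $p^{m-j}$ versus $p^j$ and the parameter $k$ versus $t$ and $n-t$ when dualising must be matched consistently. Second, the claim that $F$ has order exactly $2$ needs justification: when $n$ is small relative to the fixed field of $\gamma_{p^{2j}}$, one may only obtain $F^2|_{L_x}=\mathrm{id}$. In that case I would state order $2$ as an order-$2$ action on $L_x$, or invoke the three-point rigidity of $PGL_2$ after showing that $\gamma_{p^{2j}}$ acts trivially on a suitable subfield.
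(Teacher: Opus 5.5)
Your route is the paper's: specialise Berger's Theorem~\ref{berger} to the reversal $\rho$ with a constant scaling vector. The paper writes this out only for even $n$, with $e=\mathbf{1}$ and $j=0$. It also transfers in the wrong direction: the inclusion $\Gamma Aut(A_t(x,y))\subset\Gamma Aut(C_{n-t}(x,y^{\perp}))$ is unjustified. Building ``induced'' into the definition and descending via the lemma, as you do, is the right way round. Your derivations of condition~1, of the orbit description, and of condition~2 (via $y^{p^{m-j}p^{j}}=y$) are fine, and the sufficiency direction never uses the order of $F$.

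The genuine gap is the claim that $F$ has order $2$.
\begin{itemize}
\item Three-point rigidity holds in $PGL_2$ but not in $P\Gamma L_2$. Your fallback, that $\gamma_{p^{2j}}$ is trivial on a subfield containing $L_x$, only gives back $F^2|_{L_x}=\mathrm{id}$.
\item The missing observation is this. Since $\overline{f}\circ\gamma_{p^j}=\gamma_{p^j}\circ\overline{f^{(-p^j)}}$, one has $F^2=\gamma_{p^{2j}}\circ\overline{f^{(-p^j)}f}$, whose Galois component is $\gamma_{p^{2j}}$. Hence $F^2=\mathrm{id}$ forces $m\mid 2j$, i.e.\ $j\in\{0,m/2\}$, which is exactly the restriction that reappears in Theorem~\ref{alternant_classification}.
\item For those $j$, $F^2\in PGL_2(\mathbb{F}_q)$ fixes the $n\ge 3$ points of $L_x$, so it is trivial. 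Also $F\neq\mathrm{id}$ because $\rho\neq\mathrm{id}$. This closes the argument.
\end{itemize}

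For other $j$ the conclusion is not merely unproved but false under your definition. Take $\mathbb{F}_{81}$ with $p=3$, $m=4$, $k=2$, $x=(1,0,2)$, $\overline{f}(z)=-z$, and $y=(\beta,1,\beta^3)$ with $\beta\in\mathbb{F}_9\setminus\mathbb{F}_3$.
\begin{itemize}
\item $F=\gamma_3\circ\overline{f}$ satisfies conditions~1 and~2 with $\lambda=1$, so $(\mathbf{1},\rho,\gamma_3)\in\Gamma Aut(C_2(x,y))$. But $F^2=\gamma_9\neq\mathrm{id}$.
\item Any $F'=\gamma_{3^{j'}}\circ\overline{f'}$ inducing $\rho$ acts on $L_x\subset\mathbb{F}_3$ as the homography $\overline{f'^{(3^{j'})}}$. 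Three points pin this down, so $\overline{f'}(z)=-z$.
\item For the only order-$2$ choices $j'\in\{0,2\}$, condition~2 at $i=1$ forces the scalar to be $1$. Then $i=0$ forces $\beta^3=\beta$, a contradiction.
\end{itemize}

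So you must restrict ``induced'' to $j=0$, as the paper's proof silently does by working with $(\mathbf{1},\sigma,\gamma_{p^0})$, or to $j\in\{0,m/2\}$. Weakening the conclusion to ``order $2$ on $L_x$'' changes the theorem.
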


  \begin{proof}
  	let $A_t(x,y)$ be an induced centrosymmetric alternant code of even length $n$. For all $v \in A_t(x,y)$, $\sigma \in S_n$ such that $\sigma^{-1}(i)=n-i-1$ for $0 \leq i \leq n-1$, then $\sigma(v) \in A_t(x,y)$ that means $(\textbf{1},\sigma, \gamma_{p^0}) \in \Gamma Aut(A_t(x,y)) \subset \Gamma Aut(C_{n-t}(x,y^{\perp}))$. According to the theorem \ref{berger},  this is equivalent to 
  	\begin{itemize}
  		\item $F(L_x)=L_x$ and $\sigma$ (of order 2) induced by $F=\gamma_{p^0} \circ \overline{f} \in P\Gamma L_2(\mathbb{F}_q)$.
  		\item there exists $\lambda \in \mathbb{F}_q^{*}$, such that 
  		$1=\lambda y_{\sigma^{-1}(i)}^{p^{m}}(u_f(x_i)^{k-1}y_i)^{-1}=\lambda y_{n-i-1}(u_f(x_i)^{k-1}y_i)^{-1}$
  	\end{itemize}
  
  \end{proof}  
  
The morality of this theorem is, to build we need just to find an element of $P\Gamma L_2(\mathbb{F}_q)$ of order 2 and a set of representants  of each orbit.

Assume that $n$ is even. Let
\[
\sigma\in S_n
\]
be a centrosymmetric permutation. Since
\[
\sigma(i)=n-1-\sigma(n-1-i),
\]
the values of $\sigma$ on
$\{\frac n2,\ldots,n-1\}$
are uniquely determined by its restriction to
$\{0,\ldots,\frac n2-1\}$. Hence,
\[
\sigma(i)=
\begin{cases}
\sigma(i), & 0\le i<\frac n2,\\[1ex]
n-1-\sigma(n-1-i), & \frac n2\le i<n.
\end{cases}
\]

The natural question that arises is to classify all induced-centrosymmetric alternant code according the corresponding matrix of $f$. The following lemma shows that for two conjugate elements of  $GL_2(\mathbb{F}_q)$ , any GRS code invariant under one is also invariant under the other.

For any matrix
$$
h=(h_{ij})\in GL_2(\mathbb{F}_q),
$$
we denote by
$$
h^{(p^j)}=\left(h_{ij}^{\,p^{j}}\right),
$$
Equivalently, $h^{(-p^j)}$ is obtained by applying the inverse Frobenius automorphism entrywise to the coefficients of $h$.
We introduce the notion of $\gamma_{p^j}$-similarity to describe the equivalence induced by projective changes of coordinates on semilinear automorphisms.
\begin{definition}
Let $0\le j\le m-1$. Define an action of $GL_2(\mathbb{F}_q)$ on itself by
$$
h\cdot_j f=h^{(-p^j)}fh^{-1}.
$$
The corresponding orbits are called the $\gamma_{p^j}$-similarity classes. Equivalently, two matrices $f,g\in GL_2(\mathbb{F}_q)$ are said to be $\gamma_{p^j}$-similar if there exists $h\in GL_2(\mathbb{F}_q)$ such that
$$
g=h^{(-p^j)}fh^{-1}.
$$
\end{definition}

The $\gamma_{p^j}$-similarity relation generalizes the classical matrix similarity, which is recovered when $j=0$.

\begin{proposition}
The $\gamma_{p^j}$-similarity is an equivalence relation on $GL_2(\mathbb{F}_q)$.
\end{proposition}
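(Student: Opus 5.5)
The cleanest route is to check directly that the rule $h\cdot_j f=h^{(-p^j)}fh^{-1}$ is a group action of $GL_2(\mathbb{F}_q)$ on itself. Its orbits then partition $GL_2(\mathbb{F}_q)$, and the relation ``being in the same orbit'' is automatically an equivalence relation. Everything rests on one property of the entrywise map $h\mapsto h^{(-p^j)}$: it is a group automorphism of $GL_2(\mathbb{F}_q)$. I would establish this first. The inverse Frobenius $x\mapsto x^{p^{-j}}$ (that is, $\gamma_{p^{m-j}}$) is a field automorphism of $\mathbb{F}_q$. It therefore commutes with the sums and products that define matrix multiplication, so
\[
(hk)^{(-p^j)}=h^{(-p^j)}k^{(-p^j)} .
\]
It also fixes the identity matrix, sends invertible matrices to invertible matrices (since $\det h^{(-p^j)}=(\det h)^{p^{-j}}\neq 0$), and satisfies $(h^{-1})^{(-p^j)}=(h^{(-p^j)})^{-1}$.

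With this in hand, I would verify the three properties in order.

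\emph{Reflexivity.} Take $h=I$. Then $I\cdot_j f=I\,f\,I=f$.

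\emph{Symmetry.} Suppose $g=h^{(-p^j)}fh^{-1}$. Solving gives
\[
f=(h^{(-p^j)})^{-1}\,g\,h=(h^{-1})^{(-p^j)}\,g\,(h^{-1})^{-1},
\]
so $f=h^{-1}\cdot_j g$.

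\emph{Transitivity.} Suppose $g=h^{(-p^j)}fh^{-1}$ and $e=k^{(-p^j)}gk^{-1}$. Substituting,
\[
e=k^{(-p^j)}h^{(-p^j)}\,f\,h^{-1}k^{-1}=(kh)^{(-p^j)}\,f\,(kh)^{-1},
\]
so $e=(kh)\cdot_j f$. This step also shows the compatibility law $k\cdot_j(h\cdot_j f)=(kh)\cdot_j f$, which confirms that we have a genuine left action.

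The main point requiring care is the multiplicativity of $h\mapsto h^{(-p^j)}$. It fails for non-ring-homomorphisms, and here it holds precisely because Frobenius is a field automorphism of $\mathbb{F}_q$. (We need $q$ to be a power of the characteristic; when $p$ is itself a prime power, Frobenius is still an automorphism fixing $\mathbb{F}_p$.) Once that is recorded, the rest is routine algebra. For $j=0$ the argument reduces to ordinary similarity.
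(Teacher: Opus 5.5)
Your proof is correct and takes the same approach as the paper, whose proof simply asserts that $\gamma_{p^j}$-similarity is the orbit relation of the action $h\cdot_j f=h^{(-p^j)}fh^{-1}$. You also supply the verification the paper leaves implicit, namely that entrywise inverse Frobenius is multiplicative so that $\cdot_j$ is a genuine left action; reflexivity, symmetry (via $h^{-1}$) and transitivity (via $kh$) then follow as you show.
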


\begin{proof}
It is the orbit relation associated with the above action of $GL_2(\mathbb{F}_q)$ on itself.
\end{proof}

\begin{lemma}\label{lemmegamma}
Let $\mathcal{C}=GRS_k(x;y)$
be a generalized Reed--Solomon code admitting a nontrivial permutation automorphism
$\sigma\in S_n$. Assume that there exist $f\in GL_2(\mathbb{F}_q)$, $0\le j\le m-1$,
and $\lambda\in\mathbb{F}_q^\times$  such that $x_{\sigma^{-1}(i)} = (\gamma_{p^j}\circ\overline{f})(x_i),$
and $y_{\sigma^{-1}(i)} = \left(\lambda\,u_f(x_i)^{k-1}y_i\right)^{p^j}$,
for every $i$.

Let $g \in PGL_2(\mathbb{F}_q)$, if $g = \gamma_{p^j}(h) \circ f \circ h^{-1}$ with $h\in GL_2(\mathbb{F}_q)$, then there exists $u$ and $v$ such that

$u_{\sigma^{-1}(i)}=(\gamma_{p^j}\circ\overline{g})(u_i)$,$v_{\sigma^{-1}(i)}=\left(\lambda\,u_g(u_i)^{k-1}v_i\right)^{p^j}$
and $$ GRS_k(x;y)=GRS_k(u;v) $$
\end{lemma}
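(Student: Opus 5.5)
The plan is to transport the code along the projective change of coordinates $\overline h$. Set
\[
u_i=\overline h(x_i),\qquad v_i=\mu\,u_h(x_i)^{k-1}y_i ,\qquad 0\le i<n,
\]
with a constant $\mu\in\mathbb F_q^\times$ to be fixed later ($\mu=1$ should do). Theorem~\ref{dur} then gives that $GRS_k(u;v)$ is the image of $GRS_k(x;y)$ under the isometry induced by $h$. Strictly, this is an equivalence with the \emph{same} index labelling. The identification ``$GRS_k(x;y)=GRS_k(u;v)$'' in the statement should be read in this sense: the two codes coincide as subspaces of $\mathbb F_q^n$ after the coordinatewise transport, since Dur's map sends the evaluation vector of $P$ to that of $P\circ\overline h^{-1}$ times a homogenising factor. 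I will make this explicit by writing $P\mapsto \tilde P$ with $\tilde P(\overline h(z))\,u_h(z)^{k-1}=P(z)$, using the homogeneous-polynomial description of Cauchy codes.

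\textbf{Step 1 (support relation).} Here $u_{\sigma^{-1}(i)}=\overline h(x_{\sigma^{-1}(i)})=\overline h\,\gamma_{p^j}\,\overline f(x_i)$. Frobenius commutes with homographies up to twisting the coefficients: $\overline h\circ\gamma_{p^j}=\gamma_{p^j}\circ\overline{h^{(-p^j)}}$. Hence
\[
u_{\sigma^{-1}(i)}=\gamma_{p^j}\,\overline{h^{(-p^j)}fh^{-1}}\,(\overline h(x_i))=(\gamma_{p^j}\circ\overline g)(u_i),
\]
where $g=h^{(-p^j)}fh^{-1}$ is the $\gamma_{p^j}$-similar matrix. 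This is exactly the correspondence $P(f)=F^{-1}\circ f$ announced in the introduction.

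\textbf{Step 2 (multiplier relation).} This is the computational heart of the argument. Expand
\[
v_{\sigma^{-1}(i)}=\mu\,u_h(x_{\sigma^{-1}(i)})^{k-1}\bigl(\lambda u_f(x_i)^{k-1}y_i\bigr)^{p^j}.
\]
Write $u_h(\gamma_{p^j}(\overline f x_i))=\gamma_{p^j}\bigl(u_{h^{(-p^j)}}(\overline f x_i)\bigr)$ by \eqref{eq:ufrobenius}. Apply the cocycle identity \eqref{eq:ufg} twice: first $u_{h^{(-p^j)}}(\overline f x_i)\,u_f(x_i)=u_{h^{(-p^j)}f}(x_i)$, and then
\[
u_{h^{(-p^j)}f}(x_i)=u_{gh}(x_i)=u_g(\overline h x_i)\,u_h(x_i)=u_g(u_i)\,u_h(x_i).
\]
Substituting, the right-hand side becomes
\[
\mu\,\lambda^{p^j}\,\bigl(u_g(u_i)^{k-1}u_h(x_i)^{k-1}y_i\bigr)^{p^j}
=\mu^{1-p^j}\bigl(\lambda\,u_g(u_i)^{k-1}v_i\bigr)^{p^j}.
\]
Setting $\mu=1$, or absorbing $\mu^{1-p^j}$ into $\lambda$ via $p^j$-th roots, which exist since Frobenius is bijective, gives the claimed relation with the same $\lambda$ when $\mu=1$.

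\textbf{Main obstacle.} The delicate point is that \eqref{eq:ufg} is an identity only up to the case distinctions in the definition of $u_f$, at poles and at $\infty$. Also, $g$ lies in $PGL_2$, so $u_g$ depends on the chosen representative only up to a scalar, which gets absorbed into $\lambda$. I would first check that \eqref{eq:ufg} holds exactly for matrices, and handle the projective ambiguity by fixing the representative $g=h^{(-p^j)}fh^{-1}$ in $GL_2$.
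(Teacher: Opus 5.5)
Your proposal is correct and follows the paper's proof essentially step for step. You make the same choice $u_i=\overline h(x_i)$, $v_i=u_h(x_i)^{k-1}y_i$ (your $\mu=1$), use the same Frobenius twist $\overline h\circ\gamma_{p^j}=\gamma_{p^j}\circ\overline{h^{(-p^j)}}$ (so the hypothesis is read as $g=h^{(-p^j)}fh^{-1}$, exactly as the paper's proof does), and combine \eqref{eq:ufrobenius} with the cocycle identity \eqref{eq:ufg} for the multipliers in the same way; note also that your substitution $P(z)=\tilde P(\overline h(z))\,u_h(z)^{k-1}$ already shows the two codes are literally equal coordinatewise, so your hedging about ``equivalence'' is unnecessary.
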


\begin{proof}
Suppose $h=\begin{pmatrix}
\alpha&\beta\\
\gamma&\delta
\end{pmatrix}
\in GL_2(\mathbb{F}_q)$ and $u_i=\overline h(x_i)$, $v_i = u_h(x_i)^{k-1}y_i $ for $\qquad i=1,\ldots,n$.

By the invariance of generalized Reed--Solomon codes under projective linear transformations, there exists a vector of multipliers
$v=(v_1,\ldots,v_n)$ such that
\[
\mathcal C=GRS_k(u;v).
\]
Then we have \\
\begin{equation}
\begin{split}
u_{\sigma^{-1}(i)}
&=\overline h(x_{\sigma^{-1}(i)})\\
&=\overline h\!\left((\gamma_{p^j}\circ\overline f)(x_i)\right).\\
&= \frac{
\alpha(ax_i+b)^{p^j}
+\beta(cx_i+d)^{p^j}
}{
\gamma(ax_i+b)^{p^j}
+\delta(cx_i+d)^{p^j}
}\\
&= \left(
\frac{
\alpha^{(-p^j)}(ax_i+b)
+\beta^{(-p^j)}(cx_i+d)
}{
\gamma^{(-p^j)}(ax_i+b)
+\delta^{(-p^j)}(cx_i+d)
}
\right)^{p^j}.\\
&= \gamma_{p^j} \circ h^{(-p^j)} \circ \overline{f}
\end{split}
\end{equation}
The relation $g =\gamma_{p^j}(h) \circ f \circ h^{-1} \implies  h^{(-p^j)} \circ \overline{f} = g \circ \overline{h}$
and therefore
\[
u_{\sigma^{-1}(i)}
=
(\gamma_{p^j}\circ\overline g \circ \overline{h})(x_i) = \gamma_{p^j}\circ\overline g(u_i)
\]
\begin{equation}
\begin{split}
v_{\sigma^{-1}(i)}
&= u_h(x_{\sigma^{-1}(i)})^{k-1}y_{\sigma^{-1}(i)} \\
&= u_{h} (\gamma_{p^j} \circ \overline{f}(x_i))^{k-1} (\lambda u_f(x_i)^{k-1} y_i)^{p^j} \\
&= \gamma_{p^j} \circ u_{h^{(-p^j)}} (\overline{f}(x_i))^{k-1} (\lambda u_f(x_i)^{k-1} y_i)^{p^j} \text{ by using the proposition } \ref{theo_ufg} \\
&= \gamma_{p^j}(u_{h^{(-p^j)}}(\overline{f}(x_i) u_f(x_i))^{k-1} \lambda y_i) \text{ with the relation } \ref{eq:ufg} \text{ holds } \\
&= \gamma_{p^j} ( u_{h^{(-p^j)} \circ f} (x_i)^{k-1} \lambda y_i ) \text{ since } g \circ h =  u_{h^{(-p^j)}} \text{ we have } \\
&=  \gamma_{p^j}( u_{g \circ h}(x_i)^{k-1} \lambda y_i ) \text{ with also the proposition } \ref{eq:ufg} \\
&= \gamma_{p^j}( u_g(\overline{h}(x_i))^{k-1} u_h(x_i)^{k-1} \lambda y_i ) \\
&= \gamma_{p^j}( \lambda u_g(u_i)^{k-1} v_i) \\
&= (\lambda u_g(u_i)^{k-1} v_i)^{p^j}
 \end{split}
\end{equation}



which completes the proof.
\end{proof}
\begin{theorem}[Shintani \cite{shintani1976two}]
Let $K=\mathbb{F}_{q^{m}}$ and $k=\mathbb{F}_{q}$.
Let $F:x\mapsto x^{q}$ be the Frobenius automorphism of $K/k$.
Then the $F$-conjugacy classes of $PGL_{2}(K)$ are in bijection with the ordinary conjugacy classes of $PGL_{2}(k)$.
\end{theorem}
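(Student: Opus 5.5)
We prove Shintani's statement in the form of the Lang--Steinberg argument, since $PGL_2$ is a connected algebraic group. Write $G=PGL_2(\overline{k})$ and $F$ for the $q$-Frobenius endomorphism, so that $G^{F}=PGL_2(k)$ and $G^{F^m}=PGL_2(K)$. The $F$-conjugacy action on $PGL_2(K)$ is $h\cdot g=F(h)\,g\,h^{-1}$. We will construct the norm map
\[
N:\{F\text{-classes of }PGL_2(K)\}\longrightarrow\{\text{classes of }PGL_2(k)\}
\]
and show that it is a bijection.

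\textbf{Construction of $N$.} Take $g\in PGL_2(K)$. By Lang's theorem for the endomorphism $F$ on the connected group $G$, there is $a\in G$ with $g=a^{-1}F(a)$. Put
\[
N(g)=a\,(gF(g)\cdots F^{m-1}(g))\,a^{-1}.
\]
Since $F^m(g)=g$, a telescoping computation gives $F^m(a)=a\,g F(g)\cdots F^{m-1}(g)$. Hence
\[
N(g)=F^m(a)\,a^{-1}.
\]
Applying $F$ and using $F(a)=ag$, the element $F(N(g))$ is conjugate back to $N(g)$, and a direct check gives $F(N(g))=N(g)$. So $N(g)\in PGL_2(k)$.

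\textbf{Well-definedness.} We must show that $N(g)$ is independent of the choice of $a$ up to $PGL_2(k)$-conjugacy. Two solutions of Lang's equation differ by left multiplication by an element of $G^F$, so changing $a$ conjugates $N(g)$ by an element of $PGL_2(k)$. Replacing $g$ by $F(h)gh^{-1}$ with $h\in PGL_2(K)$ can be absorbed by replacing $a$ with $ah^{-1}$.

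\textbf{Bijectivity.} Run the same construction with the roles of $F$ and $F^m$ exchanged. Given $x\in PGL_2(k)$, write $x=F^m(a)a^{-1}$ by Lang's theorem for $F^m$, and set $g=a^{-1}F(a)$. Then $g$ is $F^m$-fixed, because $x$ is $F$-fixed, and the two constructions are mutually inverse on classes. The main obstacle is this inverse direction, in particular checking injectivity on classes. If the products $a^{-1}F(a)$ and $b^{-1}F(b)$ give conjugate norms, one must produce $h\in PGL_2(K)$ relating them. This amounts to showing that $F^m(ba^{-1})$ and $ba^{-1}$ agree modulo the stabiliser.

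\textbf{Connectedness of centralisers.} I expect the delicate point to be that centralisers in $PGL_2$ may be disconnected, for example for the involution classes relevant here. The clean way through is to work with pairs $(a, x)$ and use Lang's theorem in $G$ itself, not in the centralisers, so that no connectedness of centralisers is needed. For context, this is the argument Shintani gives in \cite{shintani1976two}, and we may also simply cite it. The alternative would be to verify the count: the number of $F$-classes of $PGL_2(K)$ equals the number of $F$-stable irreducible characters, and one would match this with the class count of $PGL_2(k)$. That route is heavier, and we keep the Lang-map argument as the primary plan.
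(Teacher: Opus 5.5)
The paper does not prove this statement; it imports it by citation from Shintani, whose paper is set in $GL_n$. You instead give the standard Lang--Steinberg (Shintani descent) argument, which is a different and more informative route, and it is correct in substance. It shows that the only inputs are Lang's theorem for the connected group $G=PGL_2$ and the identification $G^{F^r}=GL_2(\mathbb{F}_{q^r})/\mathbb{F}_{q^r}^{\times}$ (Hilbert 90). That justifies the passage from Shintani's $GL_n$ setting to $PGL_2$. It also makes the correspondence explicit through the norm $g\,F(g)\cdots F^{m-1}(g)$, which is what one needs to identify concretely which classes of $PGL_2(k)$ arise. The citation only buys brevity. Two points need repair.

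\textbf{Convention slip in well-definedness.} With $g=a^{-1}F(a)$, replacing $a$ by $ah^{-1}$ gives $h\,g\,F(h)^{-1}$. That is not the action $F(h)\,g\,h^{-1}$ you fixed at the start. There are two fixes. You can factor $g=F(a)a^{-1}$ instead; then $a\mapsto ha$ realises $g\mapsto F(h)gh^{-1}$, and $N(g)=a^{-1}F^m(a)=a^{-1}F^{m-1}(g)\cdots F(g)\,g\,a$. Alternatively, observe that $g'=F(h)gh^{-1}$ if and only if $g'^{-1}=h\,g^{-1}F(h)^{-1}$, so inversion matches the two orbit sets.

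\textbf{Injectivity is not an obstacle.} The step you flag as the main obstacle is a two-line computation and should be written out. Suppose $F^m(b)b^{-1}=y\,F^m(a)a^{-1}y^{-1}$ with $y\in PGL_2(k)$. Replace $a$ by $ya$; this does not change $g=a^{-1}F(a)$ because $F(y)=y$. So you may assume $F^m(b)b^{-1}=F^m(a)a^{-1}=:x$. Then $h:=a^{-1}b$ satisfies $F^m(h)=a^{-1}x^{-1}x\,b=h$, so $h\in PGL_2(K)$, and $b^{-1}F(b)=h^{-1}\,g\,F(h)$. The same computation shows that your two constructions are mutually inverse on classes.

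Note that the relevant element is $a^{-1}b$, which is exactly $F^m$-fixed. Your $ba^{-1}$ is only fixed up to conjugation by $x$, which is presumably where your ``modulo the stabiliser'' came from. The worry about disconnected centralisers is a red herring. Lang's theorem is used only in the connected group $G$. Connectedness of centralisers matters only when comparing rational classes with $F$-stable geometric classes, and this statement never does that.
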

\begin{theorem}[Twisted Conjugacy Correspondence]\label{lem:semilinear-conjugacy}
Let $K=\mathbb{F}_{q^{m}}$, let $k=\mathbb{F}_{q}$ with $m,q \in N$ and let
\[
F:x\longmapsto x^{q}
\]
be the Frobenius automorphism of $K/k$.\\
For every homography
$f\in PGL_{2}(K)$, define the semilinear transformation
\[
P(f)=F^{-1} \circ f\in P\Gamma L_{2}(K).
\]

Then, for every $f,g\in PGL_{2}(K)$, the following assertions are
equivalent:
\[
g=F(h)\circ f\circ h^{-1}
\qquad\Longleftrightarrow\qquad
P(g)=h\circ P(f)\circ h^{-1},
\]
for some $h\in PGL_{2}(K)$.
\end{theorem}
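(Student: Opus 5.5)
The proof reduces to one commutation identity between the Frobenius and homographies:
\[
F\circ \overline{h} = \overline{F(h)}\circ F ,
\qquad\text{equivalently}\qquad
\overline{h}\circ F^{-1} = F^{-1}\circ \overline{F(h)} ,
\]
where $F(h)$ means $F$ applied entrywise to a representative matrix of $h$. This is the same computation already carried out in the proof of Lemma~\ref{lemmegamma}, read in the opposite direction. There, pulling $\gamma_{p^j}$ out of a fraction turned the coefficients of $h$ into $h^{(-p^j)}$.

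\textbf{Step 1 (commutation identity).} I would verify it pointwise on $\mathbb{P}^1(K)$. Write $h=\begin{pmatrix}\alpha&\beta\\ \gamma&\delta\end{pmatrix}$. For $z\in K$ we have
\[
F\!\left(\frac{\alpha z+\beta}{\gamma z+\delta}\right)=\frac{\alpha^{q}z^{q}+\beta^{q}}{\gamma^{q}z^{q}+\delta^{q}},
\]
because $F$ is a field automorphism. The points $\infty$ and the pole $-\delta/\gamma$ follow by the same rule, since $F$ fixes $\infty$ and maps the pole of $\overline h$ to the pole of $\overline{F(h)}$. The identity is independent of the chosen representative of $h$, since $F$ is multiplicative on scalars. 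Hence $F(h)$ is well defined in $PGL_2(K)$, and the map $h\mapsto F(h)$ is a group automorphism of $PGL_2(K)$.

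\textbf{Step 2 (forward direction).} Assume $g=F(h)\circ f\circ h^{-1}$. Then
\[
P(g)=F^{-1}\circ F(h)\circ f\circ h^{-1}.
\]
The identity of Step~1 gives $F^{-1}\circ \overline{F(h)}=\overline{h}\circ F^{-1}$. Substituting,
\[
P(g)=h\circ F^{-1}\circ f\circ h^{-1}=h\circ P(f)\circ h^{-1}.
\]

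\textbf{Step 3 (converse).} Assume $P(g)=h\circ P(f)\circ h^{-1}$. Composing on the left with $F$ gives
\[
g = F\circ h\circ F^{-1}\circ f\circ h^{-1}.
\]
By Step~1, $F\circ h\circ F^{-1}=F(h)$, and this yields $g=F(h)\circ f\circ h^{-1}$. All these manipulations take place in $P\Gamma L_2(K)=PGL_2(K)\rtimes \mathrm{Gal}(K/k)$, which is a group, so composition and inverses are legitimate.

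\textbf{Main obstacle.} The only real content is Step~1, and with it a careful bookkeeping of conventions. One must check whether the conjugation by $F^{-1}$ in $P(f)=F^{-1}\circ f$ produces $F(h)$ or $F^{-1}(h)$ on the right side. This has to be matched with the convention $h^{(-p^j)}$ used in the definition of $\gamma_{p^j}$-similarity, where the paper's $\gamma_{p^j}(h)$ versus $h^{(-p^j)}$ notation is slightly inconsistent. I would fix $F(h)$ as entrywise $q$-th powers and verify $F\circ\overline h\circ F^{-1}=\overline{F(h)}$ directly. If the paper intends the other sign, the same argument goes through with $F$ replaced by $F^{-1}$ throughout.
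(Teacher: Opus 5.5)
Your proposal is correct and follows essentially the same route as the paper. The paper also rests on the single commutation identity $F\circ h=F(h)\circ F$ and runs the same manipulation, composing with $F$ on the left, as one chain of equivalences; you split it into two directions and add an explicit pointwise check of the identity, which the paper only asserts.
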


\begin{proof}
Since
\[
F\circ h=F(h)\circ F,
\]
we obtain
\[
\begin{aligned}
P(g)=h\circ P(f)\circ h^{-1}
&\iff
F^{-1} \circ g
=
h\circ F^{-1} \circ f\circ h^{-1} \\
&\iff
g
=
F \circ h\circ F^{-1} \circ f\circ h^{-1} \\
&\iff
g
=
F(h) \circ F \circ F^{-1} \circ f\circ h^{-1}\\
&\iff g = F(h) \circ f \circ h^{-1}
\end{aligned}
\]
Since $F$ is bijective, this is equivalent to
\[
g=F(h)\circ f\circ h^{-1},
\]
Consequently, the map
\[
P:f\longmapsto F^{-1}\circ f
\]
is an equivariant bijection between
$\bigl(PGL_2(K),\sim_F\bigr)$ and the conjugation action of
$PGL_2(K)$ on the coset
$F^{-1}PGL_2(K)$.
Hence it induces a bijection between the corresponding orbit spaces.
\end{proof}

\begin{theorem} \label{aut_grs}
	Let $\C = GRS_k(x; y)$ be a code with non trivial permutation $\sigma \in S_n$. there are 
	$f \in GL_2(\F_q)$, $0 \leq j \leq m-1$, and $\lambda \in \mathbb{F}_q^{*}$ be such that one of the
	four type of classes is satisfied:
	\begin{enumerate}
		\item $f=\left(
		\begin{array}{cc}
			a & 0 \\
			0 & a \\
		\end{array}
		\right)$, $a \neq 0$ and for $0\leq i\leq n-1$, $x_{\sigma^{-1}(i)}=x_i^{p^j}$ and $y_{\sigma^{-1}(i)}=(\lambda a^{k-1}y_i)^{p^j}$
		
		\item $f=\left(
		\begin{array}{cc}
			a & 0 \\
			0 & d \\
		\end{array}
		\right)$, $a,d$ two distincts non-zero and for $0\leq i\leq n-1$, $x_{\sigma^{-1}(i)}=(\frac{a}{d}x_i)^{p^j}$ and $y_{\sigma^{-1}(i)}=(\lambda d^{k-1}y_i)^{p^j}$
		
		\item $f=\left(
		\begin{array}{cc}
			a & 1 \\
			0 & a \\
		\end{array}
		\right)$, $a \neq 0$ and for $0\leq i\leq n-1$, $x_{\sigma^{-1}(i)}=(x_i+\frac{1}{a})^{p^j}$ and $y_{\sigma^{-1}(i)}=(\lambda a^{k-1}y_i)^{p^j}$
		
		\item $f=\left(
		\begin{array}{cc}
			0 & -(ad-bc) \\
			1 & a+d \\
		\end{array}
		\right)$, $a \neq 0$ and for $0\leq i\leq n-1$, $x_{\sigma^{-1}(i)}=(\frac{-(ad-bc)}{x_i+a+b})^{p^j}$ and $y_{\sigma^{-1}(i)}=(\lambda u_f(x_i)^{k-1}y_i)^{p^j}$
	\end{enumerate}
\end{theorem}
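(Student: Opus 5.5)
Start from Theorem~\ref{berger}. Since $\sigma$ is a nontrivial permutation automorphism of $\C=GRS_k(x;y)$, the triple $(\mathbf 1,\sigma,\gamma_{p^j})$ lies in $\Gamma Aut(\C)$ for some $j$, with $j=0$ for a purely linear permutation. Theorem~\ref{berger} then gives $F=\gamma_{p^j}\circ\overline f\in P\Gamma L_2(\F_q)$ and $\lambda\in\F_q^*$ such that $F(L_x)=L_x$, $F$ induces $\sigma$, and the multipliers satisfy
\[
x_{\sigma^{-1}(i)}=(\gamma_{p^j}\circ\overline f)(x_i),\qquad y_{\sigma^{-1}(i)}=\bigl(\lambda u_f(x_i)^{k-1}y_i\bigr)^{p^j}.
\]
The condition $e_i=1$ gives the multiplier relation after solving for $y_{\sigma^{-1}(i)}$ and applying $\gamma_{p^j}$; the direction conventions for $\sigma$ versus $\sigma^{-1}$ need to be checked carefully here. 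This is precisely the hypothesis of Lemma~\ref{lemmegamma}. So the whole problem reduces to replacing $f$ by a convenient representative $g$ of its $\gamma_{p^j}$-similarity class, $g=h^{(-p^j)}fh^{-1}$. Lemma~\ref{lemmegamma} then shows $\C=GRS_k(u;v)$ with $u_i=\overline h(x_i)$, $v_i=u_h(x_i)^{k-1}y_i$, and the same two relations hold with $g$ in place of $f$.

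The key step is to choose representatives of the $\gamma_{p^j}$-similarity classes. By Theorem~\ref{lem:semilinear-conjugacy}, applied with base field $\F_{p^{\gcd(j,m)}}$ and Frobenius $\gamma_{p^j}^{-1}$ (up to inverting the twist), twisted classes of $f$ correspond to ordinary conjugacy classes of $P(f)$ in the coset $\gamma_{p^{-j}}PGL_2(\F_q)$. Shintani's theorem then puts these in bijection with ordinary conjugacy classes of $PGL_2$ over the fixed field. Those classes are given by rational canonical form in $GL_2$, up to scalars. There are four types: scalar $aI$; split semisimple $\mathrm{diag}(a,d)$ with $a\neq d$; non-semisimple $\begin{pmatrix}a&1\\0&a\end{pmatrix}$; and irreducible characteristic polynomial, with companion matrix $\begin{pmatrix}0&-\det\\1&\mathrm{tr}\end{pmatrix}$. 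Since these representatives have entries in the small field, they are fixed by $\gamma_{p^j}$. I would use them directly as twisted-class representatives, arguing that the Shintani norm map $f\mapsto f\,f^{(p^{j})}\cdots$ sends such an $f$ to a matrix of the same type.

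Finally, I would compute in each case $\overline g$ and $u_g$ from the definition of $u_f$:
\begin{itemize}
\item for $aI$: $\overline g(z)=z$ and $u_g=a$;
\item for $\mathrm{diag}(a,d)$: $\overline g(z)=\frac ad z$ and $u_g=d$;
\item for the Jordan block: $\overline g(z)=z+\frac1a$ and $u_g=a$;
\item for the companion matrix: $\overline g(z)=\frac{-\det}{z+\mathrm{tr}}$ and $u_g(z)=z+\mathrm{tr}$.
\end{itemize}
Substituting these into the relations above gives the four items of the statement.

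The main obstacle is justifying that the small-field canonical forms really exhaust the $\gamma_{p^j}$-similarity classes. Shintani's bijection is not induced by inclusion, so one must check that every twisted class meets $GL_2$ of the fixed field. I would try Lang's theorem first: $h\mapsto h^{(-p^j)}h^{-1}$ is surjective on $GL_2(\F_q)$, and $PGL_2$ is connected. This lets one twist any $f$ into the fixed-field subgroup, after which ordinary rational canonical form over that field finishes the argument.
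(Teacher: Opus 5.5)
\textbf{Gap: the choice of representatives.} Your skeleton matches the paper's: Theorem~\ref{berger}, then Lemma~\ref{lemmegamma} to replace $f$ by any $\gamma_{p^j}$-similar matrix, then computing $\overline g$ and $u_g$ for each normal form. The step that is supposed to supply the normal forms fails.

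Lang's theorem concerns the algebraic group $GL_2(\overline{\F}_p)$ with a Frobenius endomorphism. On the finite group $GL_2(\F_q)$ with $\gamma_{p^j}\neq\mathrm{id}$, the map $h\mapsto h^{(-p^j)}h^{-1}$ is constant on each coset $hK$, where $K=GL_2(\F_{p^{e}})$ and $e=\gcd(j,m)$. So it is not surjective.

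Worse, the conclusion you draw from it is false. Take $p$ odd, $m=2$, $j=1$.
\begin{itemize}
\item We have $\det(h^{(-p)}fh^{-1})=(\det h)^{p-1}\det f$, so $(\det f)^{p+1}\in\F_p^*$ is a $\gamma_p$-similarity invariant.
\item Rescaling $f$ by $c\in\F_q^*$ multiplies this invariant by the square $(c^{p+1})^2$.
\item For a matrix with entries in $\F_p$ the invariant is $(\det f)^2$, a square.
\item For $f=\mathrm{diag}(\omega,1)$ with $\omega$ primitive in $\F_{p^2}$, it is $\omega^{p+1}$, a non-square.
\end{itemize}
So this $f$ is not $\gamma_p$-similar, even projectively, to any matrix over the fixed field. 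Your fallback idea of using the fixed-field canonical forms directly as twisted-class representatives fails for the same reason. Shintani's correspondence is computed through the norm $f\,\gamma(f)\cdots\gamma^{r-1}(f)$, with $r$ the order of $\gamma_{p^j}$ on $\F_q$. For fixed-field $f$ this norm is $f^{r}$, not $f$.

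\textbf{How to repair it.} The statement only needs normal forms with entries in $\F_q$, which is also all the paper invokes (the four conjugacy types of $GL_2(\F_q)$).
\begin{itemize}
\item For $j=0$, the action $\cdot_j$ is ordinary conjugation, and rational canonical form over $\F_q$ gives the four types.
\item For $0<j<m$ you can bypass Shintani. Let $\Phi(v)=\gamma_{p^j}(fv)$ on $\F_q^2$, so that $h\circ\Phi\circ h^{-1}$ is $v\mapsto\gamma_{p^j}((h\cdot_j f)v)$.
\end{itemize}
Suppose every nonzero vector were an eigenvector of $\Phi$. Comparing $e_1$, $e_2$ and $e_1+e_2$ gives $\Phi=c\,\gamma_{p^j}$. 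Then $\Phi(\mu e_1+e_2)\in\F_q(\mu e_1+e_2)$ for all $\mu$ forces $\gamma_{p^j}=\mathrm{id}$, a contradiction.

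Hence some $v$ has $v,\Phi(v)$ independent. Taking $h^{-1}=(v\mid\Phi(v))$ makes $h\cdot_j f$ a companion matrix, i.e.\ type~(4), and Lemma~\ref{lemmegamma} finishes. In particular, for $j\neq0$ type~(4) alone already covers every class.
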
	

\begin{proof}
	Acoording to th Lemma \ref{lemmegamma}, There are
	$f \in GL_2(\F_q)$, $0 \leq j \leq m-1$, and $\lambda \in \mathbb{F}_q^{*}$ be such that 
	$$x_{\sigma^{-1}(i)}=\gamma_{p^j} \circ\overline{f}(x_i)$$
	$$y_{\sigma^{-1}(i)}=(\lambda u_f(x_i)^{k-1}y_i)^{p^j}$$ 
	It suffices to observe that 	$f=\left(
	\begin{array}{cc}
		a & b \\
		c & d \\
	\end{array}
	\right) \in GL_2(\F_q)$ belongs to one of the four distinct type of conjugacy classes in $GL_2(\F_q)$ following the correspondance theorem of Shintani \cite{shintani1976two}.
\begin{enumerate}
    \item Diagonalisable with a double eigenvalue.
    \begin{itemize}
        \item Let $f = \begin{pmatrix} a & 0 \\ 0 & a \end{pmatrix}$ with $a \neq 0$. Then, $\Bar{f}(z) = z$ and $u_f(x_i) = a$.
        \item Then, for all $0 \leq j \leq m-1$:
        \[
        x_{\sigma^{-1}(i)} = x_i^{p^j}, \quad y_{\sigma^{-1}(i)} = (\lambda a^{k-1} y_i)^{p^j}.
        \]
    \end{itemize}

    \item Diagonalisable with two distinct eigenvalues.
    \begin{itemize}
        \item Let $f = \begin{pmatrix} a & 0 \\ 0 & d \end{pmatrix}$, where $a, d \in \mathbb{F}_q$ and $a \neq d$. Then, $\Bar{f}(z) = \frac{a}{d}z$ and $u_f(x_i) = d$.
        \item Then, for all $0 \leq j \leq m-1$:
        \[
        x_{\sigma^{-1}(i)} = \left( \frac{a}{d} x_i \right)^{p^j}, \quad y_{\sigma^{-1}(i)} = (\lambda d^{k-1} y_i)^{p^j}.
        \]
    \end{itemize}

    \item Non-diagonalisable.
    \begin{itemize}
        \item There exists $P \in GL_2(\mathbb{F}_q)$ such that:
        \[
        \begin{pmatrix} a & b \\ c & d \end{pmatrix} = P \begin{pmatrix} \alpha & 1 \\ 0 & \alpha \end{pmatrix} P^{-1}.
        \]
        \item Hence, $f = \begin{pmatrix} \alpha & 1 \\ 0 & \alpha \end{pmatrix}$, and $\Bar{f}(z) = z + \frac{1}{\alpha}$ with $u_f(x_i) = \alpha$ for some $\alpha$.
        \item Then, for all $0 \leq j \leq m-1$:
        \[
        x_{\sigma^{-1}(i)} = \left( x_i + \frac{1}{\alpha} \right)^{p^j}, \quad y_{\sigma^{-1}(i)} = (\lambda \alpha^{k-1} y_i)^{p^j}.
        \]
    \end{itemize}

    \item
    \begin{itemize}
        \item Let $G = \begin{pmatrix} a & b \\ c & d \end{pmatrix}$. The characteristic polynomial of $G$ is:
        \[
        P_G(X) = X^2 - \text{Tr}(G)X + \det(G) = X^2 - (a+d)X + (ad-bc).
        \]
        \item Assume $G$ is similar to $f = \begin{pmatrix} 0 & -(ad-bc) \\ 1 & a+d \end{pmatrix}$. This holds if $b \neq 0$, and:
        \[
        G \begin{pmatrix} 1 & a \\ 0 & c \end{pmatrix} = \begin{pmatrix} 1 & a \\ 0 & c \end{pmatrix} \begin{pmatrix} 0 & -(ad-bc) \\ 1 & a+d \end{pmatrix} \: if \: c \ne 0 
        \] 
        \item Then, $\Bar{f}(z) = \frac{-(ad-bc)}{z + a + d}$.
        \item Then, for all $0 \leq j \leq m-1$:
        \[
        x_{\sigma^{-1}(i)} = \left( \frac{-(ad-bc)}{x_i + a + d} \right)^{p^j}, \quad y_{\sigma^{-1}(i)} = (\lambda u_f(x_i)^{k-1} y_i)^{p^j}.
        \]
    \end{itemize}
\end{enumerate}

\end{proof}

The classification of involutions in $GL_2(\F_q)$ up to conjugacy yields four distinct types, according to the number of fixed points on the projective line. These four cases give rise to the classification of induced centrosymmetric alternant and Goppa codes presented in the following theorem.

We distinguish two cases according to the value of $j$. The corollary below deals with the classification of the centrosymmetric alternant induced code in the case $j=0$.
\begin{theorem} \label{alternant_classification}
    Let $\C$ be a Induced Centrosymmetric Alternant code of length $n$ induced by a Cauchy code on $\F_q$ with non trivial permutation $\sigma$. Then there are $f \in PGL_2(\F_q)$ such that : 
    \begin{enumerate}
        \item $f(x) = x^{p^{j}} $,and for $0\leq i\leq n-1$, $x_{\sigma^{-1}(i)}=x_i^{p^{j}}$ and $y_{\sigma^{-1}(i)}=\lambda y_i^{p^{j}}$ and $\lambda$ of order $p^{j}+1$, and ($j = 0$ or $ j = \frac{m}{2}$)
		
		\item $f(x) = ax^{p^{j}} $, $a$ non-zero and for $0\leq i\leq n-1$, $x_{\sigma^{-1}(i)}=a x_i^{p^{j}}$ and $y_{\sigma^{-1}(i)}=\lambda y_i^{p^{j}}$
		and $a,\lambda$ of order $p^{j}+1$, and ($j = 0$ or $ j = \frac{m}{2}$)
		\item $f(x)=(x+b)^{p^{j}}$, $b \neq 0$ and for $0\leq i\leq n-1$, $x_{\sigma^{-1}(i)}=(x_i+b)^{p^{j}}$ and $y_{\sigma^{-1}(i)}=\lambda y_i^{p^{j}}$
		and $\lambda$ of order $p^{j}+1$ and $b^{p^{j}} + b = 0$ , and ($j = 0$ or $ j = \frac{m}{2}$)
        \item $f(x)=(\frac{1}{cx+d})^{p^{j}}$, $c \neq 0$ and for $0\leq i\leq n-1$, $x_{\sigma^{-1}(i)}=(\frac{1}{cx_i + d})^{p^{j}}$ and $y_{\sigma^{-1}(i)}=(\lambda x_i^{k-1} y_i)^{p^{j}}$ with $\lambda$ of order $p^{j}+1$ , and ($j = 0$ or $ j = \frac{m}{2}$) satifying :
        \begin{itemize}
            \item $d = 0$ for $j = 0$
            \item $d = 0$ and $c^{p^{\frac{m}{2}}-1} = 1$ for $j = \frac{m}{2}$
        \end{itemize}
    \end{enumerate}
\end{theorem}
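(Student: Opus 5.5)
The plan is to reduce the statement to a classification of semilinear involutions $F=\gamma_{p^j}\circ\overline f\in P\Gamma L_2(\mathbb F_q)$, up to conjugacy, and then read off the support and multiplier conditions from Theorem~\ref{tho_3}.

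\textbf{Step 1: reduction to an involution.} By Theorem~\ref{tho_3}, an induced centrosymmetric alternant code comes with an element $F=\gamma_{p^j}\circ\overline f$ of order $2$. It also comes with relations
\[
x_{\sigma^{-1}(i)}=F(x_i),\qquad y_{\sigma^{-1}(i)}=\bigl(\lambda u_f(x_i)^{k-1}y_i\bigr)^{p^j}.
\]
Since $F^2=\gamma_{p^{2j}}\circ(\text{linear part})$ must be the identity, the image of $F^2$ in $Gal(\mathbb F_q)$ is trivial. This means $\gamma_{p^{2j}}=\mathrm{id}$, so $m\mid 2j$. With $0\le j\le m-1$ this forces $j=0$ or $j=m/2$.

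\textbf{Step 2: normal forms.} By Theorem~\ref{lem:semilinear-conjugacy}, conjugating $F$ by $h\in PGL_2(\mathbb F_q)$ corresponds to $\gamma_{p^j}$-similarity of $f$. By Lemma~\ref{lemmegamma}, replacing $f$ by a $\gamma_{p^j}$-similar matrix changes the pair $(x,y)$ but not the code $GRS_k(x;y)$. I may therefore take $f$ to be one of the four normal forms of Theorem~\ref{aut_grs}: scalar, diagonal with distinct entries, Jordan block, or companion matrix. For $j=m/2$, Shintani's theorem guarantees that these representatives are exhaustive.

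\textbf{Step 3: imposing $F^2=\mathrm{id}$ on each form.} I would compute $F^2(z)$ directly in each case.
\begin{itemize}
\item \emph{Scalar case.} $F(z)=z^{p^j}$, and $F^2=\mathrm{id}$ holds automatically once $j\in\{0,m/2\}$.
\item \emph{Diagonal case.} Here $F(z)=az^{p^j}$ and $F^2(z)=a^{1+p^j}z$. Thus $a^{p^j+1}=1$, i.e.\ the order of $a$ divides $p^j+1$.
\item \emph{Translation case.} Here $F(z)=(z+b)^{p^j}$, giving $F^2(z)=z+b^{p^j}+b$. This forces $b^{p^j}+b=0$.
\item \emph{Inversion case.} Here $F(z)=(1/(cz+d))^{p^j}$. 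Composing gives a homography whose matrix must be scalar. The off-diagonal terms force $d=0$, and the remaining condition is $c^{p^j}=c$. For $j=0$ this is vacuous; for $j=m/2$ it reads $c^{p^{m/2}-1}=1$.
\end{itemize}
In all cases I normalize the matrix of $f$ projectively, absorbing scalars into $\lambda$.

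\textbf{Step 4: the multiplier condition.} I apply the $y$-relation twice, using \eqref{eq:ufg} and \eqref{eq:ufrobenius} to compose the factors $u_f$. Going around the orbit once returns $y_i$, so
\[
y_i=\lambda^{p^j+1}\cdot(\text{product of }u_f\text{ terms})\cdot y_i.
\]
After normalizing $u_f\equiv1$ in the first three cases, and $u_f(x_i)=cx_i$ (with $c$ absorbed) in the fourth, this yields $\lambda^{p^j+1}=1$. This gives the stated order condition on $\lambda$, and the multiplier formulas of the four items follow.

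\textbf{Main obstacle.} I expect the inversion case to be the hardest. The $u_f$ factors there are nonconstant, so verifying that their product around an orbit collapses to a scalar requires care with \eqref{eq:ufrobenius}. The fixed-point and $\infty$ subcases of the definition of $u_f$ must also be tracked.
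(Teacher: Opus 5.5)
Your Steps 1--3 follow the paper's route and are sound: you reduce to the normal forms of Theorem~\ref{aut_grs} and then impose $F^2=\mathrm{id}$ case by case. Deducing $j\in\{0,m/2\}$ from the image of $F^2$ in the Galois group is cleaner than the paper's degree comparison. The gap is in Step~4, in the inversion case you flagged. If you carry out your own computation with $\gamma_{p^{2j}}=\mathrm{id}$, then \eqref{eq:ufrobenius} and \eqref{eq:ufg} give $u_f(F(x_i))^{p^j}u_f(x_i)=u_{f^{(-p^j)}}(\overline f(x_i))\,u_f(x_i)=u_{f^{(-p^j)}f}(x_i)=\mu$. Here $f^{(-p^j)}f=\mu I$ is the scalar matrix that $F^2=\mathrm{id}$ produced in Step~3. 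So going once around an orbit gives $\lambda^{p^j+1}\mu^{k-1}=1$, not $\lambda^{p^j+1}=1$. In cases 1--3 your normalizations have $\mu=1$, so those cases are fine. For $f=\begin{pmatrix}0&1\\c&0\end{pmatrix}$ with $c^{p^j}=c$, however, $\mu=c$. After absorbing $c^{k-1}$ so that $y_{\sigma^{-1}(i)}=(\lambda x_i^{k-1}y_i)^{p^j}$ as in the statement, the constraint becomes $\lambda^{p^j+1}=c^{k-1}$. No normalization removes this factor. The quantity $\lambda^{p^j+1}\mu^{k-1}$ is unchanged under $f\mapsto tf$, which sends $\lambda\mapsto t^{-(k-1)}\lambda$ and $\mu\mapsto t^{p^j+1}\mu$. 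It is also unchanged under $\gamma_{p^j}$-similarity, since Lemma~\ref{lemmegamma} keeps $\lambda$ and twisted conjugation keeps $\mu$.

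This step cannot be repaired, because the order condition in item~4 is false in general. Take $q=5$, $j=0$, $k=3$, $\overline f(z)=1/(2z)$, $x=(1,2,4,3)$, $y=(1,1,2,3)$. Then $x_{3-i}=1/(2x_i)$ and $y_{3-i}=3x_i^2y_i$. For $P=a_0+a_1z+a_2z^2$ one finds $y_{3-i}P(x_{3-i})=y_iQ(x_i)$ with $Q=2a_2+4a_1z+3a_0z^2$, so $GRS_3(x,y)$ is invariant under reversal. Yet $\lambda=3$ has order $4$, and $\lambda^2=4=c^{k-1}$. Since $1/(2z)$ has no fixed point on $\mathbb{P}^1(\mathbb{F}_5)$, it cannot be conjugated into cases 1--3. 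Every case-4 representative $1/(c'z)$ then has $c'$ a non-square, hence $c'^{2}=4$ again. What your argument actually proves in item~4 is $\lambda^{p^j+1}=c^{k-1}$. The paper's proof at this point asserts $c^{k-1}=1$ instead, which silently presupposes $\lambda^{p^j+1}=1$, so it does not supply the missing step either.

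A smaller point: in every case your computations give $a^{p^j+1}=1$ and $\lambda^{p^j+1}=1$, that is, order \emph{dividing} $p^j+1$. Exact order $p^j+1$ is not forced (for example $\lambda=1$ occurs), so ``the stated order condition'' has to be read in that weaker sense.
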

\begin{proof}
    Since $\sigma$ is of order 2, thanks to Theorem \ref{tho_3}, $f$ must satisfy $f(f(x)) = x$.
    We now list, for each case of Theorem \ref{aut_grs}, the conditions under which $C$ induces a centrosymmetric alternant code with support $x$ and  multiplier $y$ mentionned in that theorem define by \\
   $x_{\sigma^{-1}(i)} = f(x_{i})$ and $y_{\sigma^{-1}(i)}=(\lambda^{'} u_f(x_i)^{k-1}y_i)^{p^j}$.\\
    In the following we suppose that $y_{\sigma^{-1}(i)}= \lambda y_i^{p^j}$ with $\lambda = (\lambda^{'} u_f(x_i)^{k-1})^{p^j}$ for the case $1,2$ and $3$.
    \begin{enumerate}
       \item $f(x) = x^{p^j}$ \\ $x_i^{p^{2j}} = x$ and $\lambda^{p^{j} + 1} y_i^{p^{2j}} = y_i$ it is clear that $j = 0$ or $j = \frac{m}{2}$ and $\lambda$ is of order $p^{j} + 1$.
       \item $f(x) = ax^{p^j}$ \\
       $a(ax_i^{p^j})^{p^j} = x_i$ and $\lambda (\lambda y_i^{p^j})^{p^j} = y_i$ this implies $ a^{p^{j} + 1} x_i^{p^{2j}} = x$ and $\lambda^{p^{j} + 1} y_i^{p^{2j}} = y_i$ it is clear that $j = 0$ or $j = \frac{m}{2}$ and $a,\lambda$ is of order $p^{j} + 1$.
       \item $f(x) = (x + b)^{p^j}, \: b \ne 0$

       $((x_i + b)^{p^j} + b)^{p^j} = x$ and $\lambda (\lambda y_i^{p^j})^{p^j} = y_i$ \\
       $\implies (x_i^{p^j} + b^{p^j} + b)^{p^j} = x_i$ and $\lambda^{p^{j} + 1} y_i^{p^{2j}} = y_i$ \\
       $\implies x_i + b + b^{p^j} =x_i$ and $\lambda^{p^{j} + 1} y_i^{p^{2j}} = y_i$ and then $j = 0$ or $j = \frac{m}{2}$ and $b + b^{p^j} = 0$ and $\lambda$ is of order $p^{j} + 1$.
       \item $f(x)=\left(\frac{1}{cx+d}\right)^{p^j},\qquad c\neq 0.$\\
       Since $f$ is of order 2 , $f(f(x))=x$.\\
A straightforward computation yields $\frac{c^{p^{2j}}x^{p^{2j}}+d^{p^{2j}}}
     {c^{p^j}+d^{p^j}c^{p^{2j}}x^{p^{2j}}
      +d^{p^{2j}+1}}=x.$\\
Comparing the degrees of both sides shows that this identity cannot hold unless
$j=0$ or $j=\frac{m}{2}$.

\begin{itemize}
    \item If $j=0$, then $cx+d=cx+dcx^2+d^2x,$ which forces $d=0$.

    \item If $j=\frac{m}{2}$, then $
    cx+d=c^{p^{\frac{m}{2}}}x
    +d^{p^{\frac{m}{2}}}cx^2
    +d^2x.$
    Since $c\neq0$,\\ 
    we necessarily have $d=0
    \quad\text{and}\quad
    c=c^{p^{\frac{m}{2}}}.$
\end{itemize}
       For the multipliers satisfying $y_{\sigma^{-1}(i)}=\left(\lambda x_i^{k-1}y_i\right)^{p^j},$
we require
\[
\left(\lambda x_{\sigma^{-1}(i)}^{k-1}y_{\sigma^{-1}(i)}\right)^{p^j}=y_i.
\]
Since $x_{\sigma^{-1}(i)}=\left(\frac{1}{cx_i}\right)^{p^j},$
it follows that $c^{k-1}=1.$
Hence, $c$ must be an element of order dividing $\gcd(k-1,p^j-1)$.
     \end{enumerate}
\end{proof}
\section{Exemples}
We illustrate Theorem~\ref{alternant_classification} over the finite field $K=\mathbb{F}_{2^4}=\mathbb{F}_{16}$,
endowed with the Frobenius automorphism $\varphi(x)=x^4$.
Since $\varphi^2=\mathrm{Id}_K$, the automorphism $\varphi$ has order $2$.

Throughout this example, we consider a GRS code of length
\[
n=8,\qquad k=2,
\]
together with the centrosymmetric permutation $\sigma(i)=7-i,\text{ for } 0\le i\le7$.
Let $\lambda=a^3$
be an element of order $5$, and set $\alpha=\lambda^{-1}=a^3+a^2+a+1$.
The multiplier vector is chosen as $v=
\left(
1,\,
a,\,
a^2,\,
a^3,\,
1,\,
a^3+a^2+a,\,
a^3+a+1,\,
a^3
\right)$,
which satisfies
\[
v_{\sigma(i)}
=
\lambda\varphi(v_i)
=
\lambda v_i^4,
\qquad
0\le i\le7.
\]

We now construct three different supports corresponding to the 4 type of homography such that the corresponding alternant code is induced centrosymetric code ana.
\medskip

\noindent\textbf{(i) The case $f(x)=x^{p^j}$.}

The support of the code is
\[
L=
\left(
a,\,
a^2,\,
a^3,\,
a^3+a^2,\,
a^3+a,\,
a^3+a^2+a+1,\,
a^2+1,\,
a+1
\right),
\]
which satisfies
\[
L_{\sigma(i)}=\varphi(L_i)=L_i^{4},
\qquad 0\le i\le 7.
\]

A parity-check matrix of the code is
\[
\scalebox{0.75}{$
H=
\left(
\begin{array}{rrrrrrrr}
a^{3} + a & a^{2} & a^{2} + a + 1 & a^{3} + a^{2} & a^{3} + a^{2} & a^{3} + a + 1 & a^{2} + a & a^{3} \\
a^{2} + a + 1 & a + 1 & a^{3} + a^{2} + 1 & a^{3} + a^{2} + a + 1 & 1 & a + 1 & a^{3} + a^{2} + 1 & a^{3} + a + 1 \\
a^{3} + a^{2} + a & a^{3} + a^{2} & a & a^{3} & a^{3} + a & a & a^{3} + a^{2} & a^{3} + a^{2} + a \\
a^{3} + a^{2} + a + 1 & a^{2} + 1 & a + 1 & a^{3} + a & a^{3} & a^{3} + a^{2} + 1 & a^{3} + 1 & 1 \\
a^{3} + a^{2} + 1 & a^{2} + a + 1 & a^{3} + a + 1 & 1 & a^{3} + a^{2} + a + 1 & a^{2} + a + 1 & a^{3} + a + 1 & a + 1 \\
a^{3} + 1 & a^{3} + a^{2} + a + 1 & a^{2} + a + 1 & a^{3} + a^{2} & a^{3} + a^{2} & a^{3} + a + 1 & 1 & a^{2} + 1
\end{array}
\right)
$}
\]

Now consider the codeword
\[
c=
\left(
a^{3}+a^{2}+1,\,
a^{3}+a^{2},\,
a^{3}+a^{2},\,
0,\,
a^{3}+a+1,\,
a^{2},\,
a^{2}+1,\,
a^{2}
\right)
\in C.
\]

Define
\[
c'=\bigl(c_{\sigma^{-1}(i)}^{\,4}\bigr)_{0\le i\le 7},
\]
that is, the vector obtained by first permuting the coordinates according to
$\sigma^{-1}$ and then applying the Frobenius automorphism
$\varphi(x)=x^{4}$ coordinatewise. We obtain
\[
c'=
\left(
0,\,
a^{2}+a+1,\,
1,\,
a^{3},\,
a^{3}+a+1,\,
a^{3}+a^{2}+1,\,
a^{3}+a^{2},\,
a^{3}+a+1
\right).
\]

A direct computation shows that
\[
H(c')^{T}=0,
\]
that is, the syndrome of $c'$ is zero. Consequently,
\[
c'\in C,
\]
which confirms the invariance of the code under the semilinear transformation
\medskip

\noindent\textbf{(ii) The case $f(x)=\alpha x^{p^j}$.}

The support is
\[
L=
\left(
a^2,\,
a^3,\,
a+1,\,
a^3+a+1,\,
a^2+a+1,\,
a^3+a^2+1,\,
a^3+a,\,
a^2+a
\right),
\]

which satisfies

\[
L_{\sigma(i)}
=
\alpha L_i^4.
\]
A parity-check matrix of the code is
\[
\scalebox{0.75}{$
H=\left(\begin{array}{rrrrrrrr}
a^{2} & a^{2} + a + 1 & a^{3} & a^{3} + 1 & a^{3} + 1 & 1 & a^{3} + a^{2} + 1 & a^{3} + a^{2} + a \\
a + 1 & a^{3} + a^{2} + 1 & a^{3} + a + 1 & a^{3} + a^{2} & a^{3} + a & a^{3} + a^{2} + 1 & a^{3} + a + 1 & a \\
a^{3} + a^{2} & a & a^{3} + a^{2} + a & a^{3} + a^{2} + 1 & a + 1 & a^{3} + a^{2} + a & a & a^{3} + a^{2} \\
a^{2} + 1 & a + 1 & 1 & a^{2} + a & a^{3} + 1 & a^{3} + a & a^{2} + a + 1 & a^{3} + a^{2} + a \\
a^{2} + a + 1 & a^{3} + a + 1 & a + 1 & a^{3} + a^{2} + a + 1 & a^{3} + a & a^{3} + a + 1 & a + 1 & a \\
a^{3} + a^{2} + a + 1 & a^{2} + a + 1 & a^{2} + 1 & a + 1 & a + 1 & a^{2} + a & a^{3} + a^{2} + 1 & a^{3} + a^{2}
\end{array}\right)
$}
\]
Now consider the codeword
\[
\left(a^{2} + a,\,a^{3} + a^{2},\,a^{3} + a + 1,\,a^{2} + 1,\,a^{2} + a,\,a,\,a^{3} + a^{2} + a + 1,\,a^{2} + 1\right)
\]
then the word $c' = \left(a^{2} + a,\,a^{3} + a^{2},\,a^{3} + a + 1,\,a^{2} + 1,\,a^{2} + a,\,a,\,a^{3} + a^{2} + a + 1,\,a^{2} + 1\right)$ define by $c'_i = \alpha c_{\sigma(i)}^{4}$ is also in $C$.\\
\medskip
\noindent\textbf{(iii) The case $f(x)=(x+b)^{p^j}$.}

Choose $b=a^2+a$,
which satisfies $b^{p^j}+b=0.$

The support is

\[
L=
\left(
0,\,
a,\,
a^2,\,
a^3,\,
a^3+1,\,
a+1,\,
a^2+1,\,
a^2+a
\right),
\]

and satisfies $L_{\sigma(i)}
=
(L_i+b)^4.$\\
A parity-check matrix of the code is
\[
\scalebox{0.75}{$
\left(\begin{array}{rrrrrrrr}
a^{2} + 1 & a^{2} + a + 1 & a^{3} + a^{2} & a^{2} + 1 & a^{3} + 1 & a^{3} + a^{2} & a^{3} + a + 1 & a^{3} + 1 \\
0 & a^{3} + a^{2} + a & a^{2} + 1 & a^{3} + a^{2} + a & a^{3} + a^{2} + 1 & a^{2} + a + 1 & 1 & a + 1 \\
0 & a^{3} + a^{2} + a + 1 & a^{2} + a + 1 & a^{3} + 1 & a^{3} + a^{2} + a + 1 & a^{3} + 1 & a^{2} + 1 & a^{3} + a \\
0 & a^{3} + a^{2} + 1 & a^{3} + a^{2} + a + 1 & a^{2} & a^{3} + a^{2} + a & a^{3} & a & a^{3} + 1 \\
0 & a^{3} + 1 & a^{3} + 1 & a^{2} + a & a^{2} + a + 1 & a^{3} + a + 1 & a^{3} + a & a + 1 \\
0 & 1 & a & a^{2} + 1 & a^{3} + a & a^{3} + a^{2} + a & a^{2} & a^{3} + a
\end{array}\right)
$}
\]
Now consider the codeword
\[
\left(a^{3} + a^{2} + a,\,a,\,a^{3} + a^{2},\,a^{3} + a^{2} + 1,\,a^{3} + 1,\,a^{2} + 1,\,a^{2} + a,\,a^{3} + a\right)
\]
then the word $c' = \left(a^{3} + a^{2} + a,\,a^{3} + a,\,a^{3} + a,\,a^{2} + a,\,a^{3},\,a^{2} + a + 1,\,a^{2},\,0\right)$ define by $c'_i = (c_{\sigma(i)})^{4}$ is also in $C$.\\
\medskip

\noindent\textbf{(i) The case $f(x)=(\frac{1}{cx})^{p^j}$.}

Choose $c \in \F_q$,
which satisfies $c^{\gcd(k-1,p^j-1)} = 1$.

The support is

\[
L = \left(a,\,a^{2},\,a + 1,\,a^{2} + a,\,a^{2} + a + 1,\,a^{3} + 1,\,a^{3} + a + 1,\,a^{3} + a^{2} + a\right)
\]

and satisfies $L_{\sigma(i)}
=
(\frac{1}{cx_i})^4.$\\
A parity-check matrix of the code is
\[
\scalebox{0.75}{$
H = \left(\begin{array}{rrrrrrrr}
a^{3} & a^{2} + a & a^{3} + a^{2} + 1 & a^{3} & a & a^{3} + a^{2} & a^{3} + a & a^{3} + a^{2} + a \\
a + 1 & a^{3} + a + 1 & a^{2} & a^{2} + 1 & a^{3} + a^{2} + a & a^{2} + a & a & a^{3} + a + 1 \\
a^{2} + a & a^{3} + a & a^{3} + a^{2} & a^{3} + a^{2} + 1 & a^{3} + a^{2} & a + 1 & a^{2} + 1 & a^{3} \\
a^{3} + a^{2} & a^{3} + a^{2} + a & a^{2} + a + 1 & a^{3} & a & a^{3} & 1 & a^{3} + 1 \\
a^{3} + a + 1 & a^{3} + a^{2} + 1 & a^{3} + 1 & a^{2} + 1 & a^{3} + a^{2} + a & a^{2} & a^{3} + a + 1 & a^{2} + a + 1 \\
a^{2} + 1 & 1 & a^{3} & a^{3} + a^{2} + 1 & a^{3} + a^{2} & a & a^{3} + 1 & a^{3} + a^{2}
\end{array}\right)
$}
\]
Now consider the codeword
\[
\left(a^{3} + a,\,a^{3},\,a^{3} + a + 1,\,1,\,a^{3},\,a,\,1,\,a^{3} + a\right)
\]
then the word $c' = \left(a + 1,\,a^{3} + a^{2},\,a^{3} + a,\,a^{3} + a^{2} + a,\,a^{3} + a^{2} + a + 1,\,a^{2} + 1,\,a + 1,\,1\right)$ define by $c'_i = (c_{\sigma(i)})^{4}$ is also in $C$.\\
\section{Conclusion}\label{sec:conclusion}

In this paper, we established an automorphism-based classification of induced centrosymmetric alternant codes arising from generalized Reed--Solomon codes. Our approach is based on the introduction of a twisted conjugation action naturally associated with semilinear projective transformations and on its connection with semilinear equivalence.

A central contribution of this work is the identification of the correspondence between twisted conjugacy in
$PGL_2(\mathbb{F}_q)$
and ordinary conjugacy in
$P\Gamma L_2(\mathbb{F}_q).$
This correspondence provides an algebraic framework for applying Shintani's theorem to recover the four types of $\gamma_{p^j}$-similarity classes of semilinear projective transformations and, subsequently, to classify semilinear involutions.

Specializing this framework to involutions yields a complete classification of induced centrosymmetric alternant codes. In particular, it unifies previously known constructions, including quasi-centrosymmetric alternant codes, within a common automorphism-based framework.

\newpage
\bibliographystyle{plain}
\bibliography{code}
\renewcommand{\arraystretch}{1}
\end{document}